\numberwithin{equation}{section}
\newtheorem{theorem}{Theorem}[section]
\newtheorem{claim}[theorem]{Claim}
\newtheorem{fact}[theorem]{Fact}
\newtheorem*{T:equiv}{Theorem~\ref{t:equiv}}
\newtheorem*{C:prec}{Corollary~\ref{c:prec}}
\theoremstyle{definition}
\newtheorem{remark}[theorem]{Remark}
\newtheorem{definition}[theorem]{Definition}
\DeclareMathOperator{\dH}{d_H}
\DeclareMathOperator{\Tr}{Tr}
\newcommand{\NN}{\mathbb{N}}
\newcommand{\RR}{\mathbb{R}}
\newcommand{\CC}{\mathbb{C}}
\newcommand{\iH}{\mathcal{H}}
\newcommand{\iS}{\mathcal{S}}
\newcommand{\iN}{\mathcal{N}}
\newcommand*{\defeq}{\stackrel{\text{def}}{=}}
\renewcommand{\phi}{\varphi}
\renewcommand{\epsilon}{\varepsilon}
\begin{document}

\title[Effective eigenvalue approximation from moments]{Effective eigenvalue approximation from moments for self-adjoint trace-class operators}

\author{Rich\'ard Balka} 
\address{HUN-REN Alfr\'ed R\'enyi Institute of Mathematics, Re\'altanoda u.~13--15, H-1053 Budapest, Hungary, AND Institute of Mathematics and Informatics, Eszterh\'azy K\'aroly Catholic University, Le\'anyka u.~4, H-3300 Eger, Hungary}
\email{balkaricsi@gmail.com}

\author{G\'abor Homa}
\address{HUN-REN Wigner Research Centre for Physics, P.~O.~Box 49, H-1525 Budapest, Hungary}
\email{homa.gabor@wigner.hun-ren.hu}

\author{Andr\'as Csord\'as}
\address{Department of Physics of Complex Systems, E\"{o}tv\"{o}s Lor\'and University, P\'azm\'any P\'eter s\'et\'any 1/A, H-1117 Budapest, Hungary}
\email{csordas@tristan.elte.hu}

\subjclass[2020]{46N50, 47A75, 47B65, 47G10, 81Q10}

\keywords{spectral theory, self-adjoint, trace-class, integral operators, density operators, open quantum systems, quantum theory, positive semidefinite, Schatten--von Neumann norm, Newton's identities, Faddeev--LeVerrier algorithm, entanglement}

\begin{abstract} Spectral properties of bounded linear operators play a crucial role in several areas of mathematics and physics. For each self-adjoint, trace-class operator $O$ we define a set $\Lambda_n\subset \RR$, and we show that it converges to the spectrum of $O$ in the Hausdorff metric under mild conditions. Our set $\Lambda_n$ only depends on the first $n$ moments of $O$. We show that it can be effectively calculated for physically relevant operators, and it approximates the spectrum well without diagonalization. 

We prove that using the above method we can converge to the minimal and maximal eigenvalues with super-exponential speed. 

We also construct monotone increasing lower bounds $q_n$ for the minimal eigenvalue (or decreasing upper bounds for the maximal eigenvalue). This sequence only depends on the moments of $O$ and a concrete upper estimate of its $1$-norm; we also demonstrate that $q_n$ can be effectively calculated for a large class of physically relevant operators. This rigorous lower bound $q_n$ tends to the minimal eigenvalue with super-exponential speed provided that $O$ is not positive semidefinite. As a by-product, we obtain computable upper bounds for the $1$-norm of $O$, too. 

Numerical examples demonstrate the relevance of our approximation in estimating entropy and negativity, which is useful, among others, in quantum optical and in open quantum system models. The results can be directly applicable to problems in quantum information, statistical mechanics, and quantum thermodynamics, where using traditional techniques based on diagonalization is impractical.

\end{abstract}

\maketitle

\section{Introduction}
\subsection{History and motivation}
Self-adjoint trace-class linear operators on infinite 
dimensional separable Hilbert spaces played an inevitable role in the foundation of quantum mechanics \cite{Neumann,Weyl,Wigner,Husimi,Moyal}. In particular, density operators \cite{Neumann} have found many applications in quantum chemistry, statistical mechanics, quantum optics, quantum information theory, and the theory of open quantum systems \cite{Hillery, Lee, Schleich, Weinbub,Salazar,book1,HPZ,H-Y}.
These operators occur in mathematical physics \cite{grothendieck1966produits,dubinsky1979structure,wong1979schwartz}, and they recently appeared in the modern physical description of our universe \cite{Fahn_2023,hsiang2024graviton}.  

Finding the spectrum of linear operators, or deciding whether they are at least positive semidefinite is an important and delicate problem: positivity preservation during the time evolution is a crucial question in the dynamics of open quantum systems \cite{Gnutzmann,Gosson,Newton,BLH,HBCSCS,HCSB}.
Quantum entanglement problems still pose hard challenges to mathematicians and theoretical physicists \cite{EPR,Bell,HORODECKI19961,Open}; they are also intimately connected to positivity via the Peres--Horodecki criterion \cite{HORODECKI19961,Perescrit}. Describing entanglement in a quantitative way is a very important task in the theory of quantum computing, and quantum communication. Therefore, it is useful to know the magnitude of the \emph{negativity}\footnote{the sum of the absolute values of the negative eigenvalues} of a density operator after the partial transpose \cite{Duan,Ruggiero,Kumari,Kao_2016}. While the spectrum can be calculated relatively easily in finite dimensional Hilbert spaces \cite{separable,Eisert,Vidal,Plenio}, for infinite dimensional non-Gaussian operators it is still a hard problem \cite{Lami_2023,Pirandola,Vogel,Miki}. Infinite dimensional trace-class integral operators arise in the modeling of quantum optical channels, the theory of distributed quantum communication, and the quantum internet \cite{Sharma2023,pathak2013elements,Kao_2016,Rohde_2021}.

The spectrum of a trace-class operator determines a lot of important quantities, arguably the most important and difficult ones are the entropies, for example the R\'enyi or von Neumann entropy, see \cite{Shirokov2010}. Entropy is a crucial notion in the areas of quantum information theory \cite{nielsen_chuang_2010}, quantum communication \cite{Sharma2023}, quantum computation \cite{pathak2013elements}, quantum cryptography \cite{Yao_22}, quantum technology of entanglement distillation and noisy quantum channels \cite{Lami2024} just to name a few. The eigenvalues and kernels of self-adjoint trace-class integral operators has also many applications in the area of machine learning, see e.\,g.~\cite{DBL} and references therein. As the eigenvalues and eigenvectors of a density operator encode all information about the physical subsystem it describes \cite{Neumann,Neumann1927,Lloyd2014}, approximating the spectrum has far-reaching physical applications and mathematical relevance. 

One of the main challenges in quantum technology is to filter out quantum noise stemming from the imprecise quantum physical measurements and other external circumstances \cite{clerk2010introduction}. Mathematical and physical properties of quantum noise and quantum decoherence are only understood in simple models so far \cite{chang2015quantum,beny2009quantum}. Understanding the dynamics of entanglement between two coupled Brownian oscillators (considered as an open quantum system) is a notoriously hard problem \cite{entang_Hu}. Here time evolution is described by master equations which are very hard to solve analytically, and the problem is usually formalized in infinite dimensional Hilbert spaces, which also adds to the challenge. The Gaussian case can be solved exactly under certain conditions \cite{entang_Hu}, but the non-Gaussian case is still mostly unexplored. Here approximating the eigenvalues yields estimate for the fidelity \cite{nielsen_chuang_2010} and for various quantum entropies \cite{ohya2004quantum}.

The \emph{maximal eigenvalue} also has a meaning in statistical physics as follows. We consider a Hamiltonian operator $\hat{H}$ of a physical system, whose energy spectrum is supposed to have a lower bound. The eigenstate with the lowest energy is called the \emph{ground state}. In many cases, after a physical measurement, the examined physical system can be found in the ground state with larger probability than in any other state. Also, the largest probability equals the maximal eigenvalue of the statistical physical density operator. For example, in case of canonical ensembles in statistical quantum physics, the equilibrium state is characterized by a density operator $\hat{\rho}=e^{-\beta \hat{H}}/\Tr \big\{e^{-\beta \hat{H}}\big\}$, see \cite{reif2009fundamentals}. Here the maximal eigenvalue equals the probability of being in the ground state; this probability is an important quantity e.\,g.~in the quantum thermodynamic description of physical systems \cite{diosi2011short}. 

The \emph{minimal eigenvalue} of a self-adjoint operator, if negative, is a quantitative measure how far the operator is from being positive semidefinite. 

Estimating the Schatten--von Neumann $1$-norm of trace-class integral operators plays an important role in the mathematics of anharmonic oscillators \cite{DELGADO20211} and areas of harmonic analysis related to physics \cite{Catana}.

One of our main examples will be \emph{polynomial Gaussian} operators, see Definition~\ref{d:poly}. For the physical relevance of polynomial Gaussian operators and for qualitative results about their positivity the readers might consult~\cite{balka2024positivity}. 

 Self-adjoint trace-class operators  are not only used in physics, but in related fields as well \cite{Haferkamp2024}. The kernels of these integral operators are given by polynomial times Gaussian kernels, which occur, for example, in general quantum physics (see, for example, the article \cite{poly_Gauss_qphys1}  and the references therein), quantum optics \cite{Szabo}, physics of open quantum systems \cite{Fleming,HHBACS}, in nonequilibrium statistical physics \cite{Dolgirev}, high energy nuclear physics \cite{Rais2025}, and at last, but not least machine learning and related applications  \cite{Polygauss_estimation}.

Finally, we believe that our results can be extended to more general cases, and thus we can get closer to answering many theoretical and experimental physics questions. For example, an interesting problem in  case of infinite dimensional Hilbert spaces is to calculate the entanglement related quantities for real physical systems \cite{HHBACS}. The method of diagonalization is often too involved to extract spectral quantities or it might not work at all. Instead, we provide a more general method for which it is enough to calculate the different moments of our operator. 

%When determining in this case these quantities, it is necessary to be able to calculate the spectral characteristics, which is not an easy task. Our work has significant benefits in this research area, and here we  demonstrate this with examples.

Our results offer a powerful and efficient alternative to traditional spectral methods in various areas of physics, most importantly quantum physics. In particular, they open the way for computing quantum entropies and entanglement measures \cite{Szalay,szalay-toth} in systems where diagonalization does not work for theoretical or practical reasons; this includes non-Gaussian infinite dimensional models that arise naturally in quantum optics and open quantum systems.

\subsection{Physical example: Entanglement dynamics of two coupled quantum harmonic oscillators}\label{ss:physics_example} 

 In this subsection we examine a system comprising two coupled harmonic oscillators with different masses. The Hamiltonian operator for this system is expressed as
\begin{equation}
\hat{H}_\text{sys}=\frac{\hat{P}_1^2}{2m_1}+\frac{1}{2}m_1\Omega^2\hat{x}_1^2+\frac{\hat{P}_2^2}{2m_2}+\frac{1}{2}m_2\Omega^2\hat{x}_2^2+\kappa(\hat{x}_1-\hat{x}_2)^2, \label{Hamiltonian}
\end{equation}
where $\hat{x}_i$ and $\hat{P}_i$ are position and momentum operators for the mechanical oscillators with masses $m_1$ and $m_2$ which oscillate at the same frequency $\Omega$, respectively. The coupling constant between the two oscillators is denoted by $\kappa$. We can introduce the following notations for quantities interpreted in  center of mass and relative coordinate systems
\begin{align*}
\hat{X}:=&\frac{m_1 \hat{x}_1 + m_2 \hat{x}_2}{m_1+m_2},~ \hat{P}:=\hat{P}_1+\hat{P}_2,~M:=m_1+m_2 \\
\hat{x}:=&\hat{x}_1-\hat{x}_2,~\hat{p}:=\frac{m_2}{m_1+m_2}\hat{P}_1-\frac{m_1}{m_1+m_2}\hat{P}_2,~\mu:=\frac{m_1 m_2}{m_1+m_2},
\end{align*} 
thus, the Hamiltonian with the new operators has the form
\begin{equation}
\hat{H}_{\text{sys}}=\frac{\hat{P}^2}{2M}+\frac{1}{2}M\Omega^2\hat{X}^2+\frac{\hat{p}^2}{2\mu}+\frac{1}{2}\mu\Omega^2\hat{x}^2+\kappa\hat{x}^2. \label{Hamiltonian}
\end{equation}
The state of this system evolves according to the Neumann equation
\begin{equation*}
    \frac{d\hat{\rho}_{\text{sys}}(t)}{dt}=-\frac{i}{\hbar}\left [\hat{H}_\text{sys},\hat{\rho}_{\text{sys}}(t)\right].
\end{equation*}
In the next step we consider two representations
\begin{align*}
\rho_o(\Delta_1,k_1,\Delta_2,k_2, t)&=\mathrm{Tr} \big\{\hat{\rho}(t) \exp\{ik_1 \hat{x}_1- i \Delta_1 \hat{P}_1+ik_2 \hat{x}_2- i \Delta_2 \hat{P}_2\} \big\} \text{ and } \\
\rho_c(\Delta,K,\delta,k, t)&= \mathrm{Tr} \big\{\hat{\rho}(t) \exp\{ik \hat{x}- i \delta \hat{p}+iK \hat{X}- i \Delta \hat{P}\} \big\},
\end{align*}
where $\rho_o$ and $\rho_c$ are the characteristic functions in the original variables and in the center of mass and relative variables, respectively. 
Here we have
\begin{eqnarray}
		\label{eq:transformation_1}
		K&:=&k_1+k_2, \quad k:=\frac{m_2}{m_1+m_2}k_1-\frac{m_1}{m_1+m_2}k_2, \nonumber\\ 
        \Delta &:=&\frac{m_1 \Delta_1+m_2\Delta_2}{m_1+m_2}, \quad \delta:=\Delta_1-\Delta_2, \nonumber
\end{eqnarray}
and finally,  after the necessary transformations and calculations we get the equation for the time evolution of the characteristic function
\begin{equation}
  \label{eq:master_equation}
			\frac{\partial}{\partial t} \rho_c(\mathbf{w},t) =  \biggl[-\frac{\hbar K}{M} \frac{\partial}{\partial \Delta}-\frac{\hbar k}{\mu} \frac{\partial}{\partial \delta}+ 
			\frac{M \Omega^2}{\hbar}  \Delta\frac{\partial}{\partial K}+ \frac{\mu\Omega^2+2\kappa}{\hbar} \delta \frac{\partial}{\partial k}    			\biggr]\rho_c(\mathbf{w},t),  
\end{equation}
where 
\begin{equation*}
    \mathbf{w}^\top=(\Delta, K, \delta, k).
\end{equation*}
Eq.~\eqref{eq:master_equation} must be solved with initial condition that at $t=0$ the $\rho$ phase space function is given by a prescribed $\rho_{c_0}$ function
\begin{equation*} \label{eq:initial_value}
\rho_c(\mathbf{w},0)= \rho_{c_0} (\mathbf{w})  .
 \end{equation*}
We note that $\mathrm{Tr}\,\hat{\rho}=1$ requires
\begin{equation*}
 \rho_{c_0} (\mathbf{w}=\mathbf{0})=1.
\end{equation*}
Let us introduce the following quantity:
 \begin{equation*}
 \label{eq: new_frequencies}
\quad \omega^2_d:=\Omega^2+\frac{2 \kappa}{\mu},
 \end{equation*}
then \eqref{eq:master_equation} can be written as
\begin{equation*}
 \frac{\partial \rho_c}{\partial t}  +\frac{\hbar K}{M}\frac{\partial \rho_c}{\partial \Delta}-\frac{M \Omega^2 \Delta}{\hbar} \frac{\partial \rho_c}{\partial K}+\frac{\hbar k}{\mu}\frac{\partial \rho_c}{\partial \delta}-\frac{\mu\omega_d^2 \delta}{\hbar} \frac{\partial \rho_c}{\partial k}=0.
 \label{eq:master_eq_rewritten}
\end{equation*}
The left-hand side is a total derivative if one identifies the following equations for the characteristics:
\begin{equation}
\frac{ d\Delta}{dt}=\frac{\hbar K}{M}, \quad \frac{ d K}{dt}=-\frac{M\Omega^2 \Delta}{\hbar}, \quad \frac{ d\delta}{dt}= \frac{\hbar k}{\mu}, \quad \frac{ d k}{dt}=-\frac{\mu\omega_d^2 \delta}{\hbar}.
\label{eq:ODE_for_characterteristics}
\end{equation}
The equations (\ref{eq:ODE_for_characterteristics}) are linear in the phase-space variables with a constant coefficient matrix 
 \begin{equation}
 \label{eq:characteristic_equation}
\frac{d}{dt}  \begin {pmatrix}\Delta\\ \noalign{\medskip}K
\\ \noalign{\medskip}\delta\\ \noalign{\medskip}k\end {pmatrix}=     \begin {pmatrix} 0 &\frac{\hbar}{M}&0&0\\ \noalign{\medskip}
-\frac{M{\Omega}^{2}}{\hbar}&0&0&0\\ \noalign{\medskip}0&0&0&\frac{ \hbar}{\mu}
\\ \noalign{\medskip}0&0&-\frac{\mu{\omega_d}^{2}}{\hbar}&0
\end {pmatrix} 
          \begin {pmatrix}\Delta\\ \noalign{\medskip}K
\\ \noalign{\medskip}\delta\\ \noalign{\medskip}k\end {pmatrix}\equiv \mathbf{M} \cdot\begin{pmatrix} \Delta \\K \\ \delta \\k \end{pmatrix}.
 \end{equation}
The solution of \eqref{eq:characteristic_equation} is
\begin{equation*}
\mathbf{w}(t)=\exp{(\mathbf{M} t)}\cdot \mathbf{w}(0).
\label{eq:evolution_of characteristics}
\end{equation*}
%The initial condition is fulfilled if
%\begin{equation*}
%\rho_c(t=0) \equiv \rho_c(\mathbf{w}(0))=\rho_{c_0}(\mathbf{w}(0)),
%\end{equation*}
It can be checked that the solution of \eqref{eq:master_equation} is
\begin{equation}  \label{eq:solution}
\rho_c(\mathbf{w},t)=\rho_{c0}\Bigl(\exp(-\mathbf{M} t) \cdot \mathbf{w}\Bigr).\end{equation}
The initial condition is trivially fulfilled.

We note that the characteristic function can be obtained as the Fourier transform of the Wigner function in all of its  phase-space variables. Consequently, the solution also preserves the mathematical structure of the initial density operator in the position representation as well. So, if we start the time evolution of a density operator from a polynomial Gaussian  form,  it will remains in this form at all times, but with time-dependent parameters. After partial transposition, our approximation method described above can provide a good estimate of the logarithmic negativity even for non-trivial initial conditions. We can only mention as a non-trivial example that the mutual information between individual physical subsystems can be calculated for more general operator families with external noises, as in the article \cite{HHBACS}.

\subsection{Summary of our results and the structure of the paper} Let $\iH$ be a separable, complex Hilbert space and let $O\colon \iH \to \iH$ be a self-adjoint, trace-class linear operator. Then $O$ has only countably many real eigenvalues, let us enumerate them with multiplicities as $\{\lambda_i\}_{i\geq 0}$. As $O$ is trace-class, we have $\sum_{i\geq 0} |\lambda_i|<\infty$. If there are only finitely many eigenvalues $\lambda_0,\dots,\lambda_m$ then define $\lambda_i=0$ for each integer $i>m$.

Following \cite{Newton} we define symmetric functions of the eigenvalues $e_k$ such that $e_0=1$ and for $k\geq 1$ we have
\begin{equation*}
e_k\defeq \sum_{0\leq i_1<\dots<i_k} \lambda_{i_1}\cdots \lambda_{i_k}; 
\end{equation*}
we can calculate $e_k$ from the moments of $O$, see Subsection~\ref{ss:ek}. In Section~\ref{s:spectrum} we introduce the finite set $\Lambda_n\subset \RR$ as 
\begin{equation*}
\Lambda_n\defeq \left\{-\frac{1}{x}: x\in \RR \text{ and } \sum_{k=0}^n e_kx^k=0\right\}.
\end{equation*}
In order to find $\Lambda_n$, we only need to calculate the above mentioned quantities $e_k$ for $k=0,\dots,n$.
We prove in Theorem~\ref{t:conv} that $\Lambda_n$ converges to the closure of the set of non-zero eigenvalues in the Hausdorff metric, see Subsection~\ref{ss:theor}. In Subsection~\ref{ss:EGauss} we demonstrate by a Gaussian example that $\Lambda_n$ is very close to the set of eigenvalues even for $n=10$; in this case the exact spectrum is known and can be compared to $\Lambda_n$. In Subsection~\ref{ss:NGauss} we calculate $\Lambda_{10}$ for a non-Gaussian example as well, where the exact spectrum is unknown. In subsection~\ref{subsec: Poly-Gauss} we calculate $\Lambda_{15}$ for a positive semidefinite poly-Gaussian example, where  not only the spectrum but also the von Neumann entropy were approximated in a special  case of polynomial Gaussian operators \cite{balka2024positivity}.

In the latter sections we approximate the extremal eigenvalues and verify faster convergence. We only consider the case of the minimal eigenvalue, since replacing the operator $O$ with $-O$ handles the maximal eigenvalue as well. Let us define  
\begin{equation*} 
\lambda_{\min}\defeq \min\{0,\lambda_i: i\geq 0\},
\end{equation*} which is zero if $O$ is positive semidefinite, and equals to the \emph{minimal eigenvalue}\footnote{$\lambda_{\min}$ is the infimum of the eigenvalues except for the case when $O$ has only finitely many strictly positive eigenvalues; we are clearly more interested in the infinite dimensional case though.} smaller than zero otherwise. 

In Section~\ref{s:qn0} we approximate $\lambda_{\min}$ by the sequence $\{q_{n,0}\}_{n\geq 1}$, where 
\begin{equation*} 
q_{n,0}\defeq \frac{-1}{\min \left\{x>0: \sum_{k=0}^n e_kx^k=0\right\}}.
\end{equation*} 
%Therefore, we give another estimate $q_{n,0}$ of $\lambda_{\min}$ using also $e_k$ in Section~\ref{s:qn0}. On the bad side, $q_{n,0}$ is not necessarily monotone anymore. On the good side, it seems to approximate $\lambda_{\min}$ much better than $q_{n,c}$, and the main reason of this is that $q_{n,0}$ does not depend on $c$ through the later sub-optimal estimate \eqref{eq:ek}. 
In Theorem~\ref{t:c=0} we prove that $q_{n,0}\to \lambda_{\min}$ with super-exponential speed:
\begin{equation*} |q_{n,0}-\lambda_{\min}|\leq \exp\left[-n \log(n)+\mathcal{O}(n)\right]
\end{equation*} 
under the slight technical condition that $\lambda_{\min}$ is negative and has multiplicity one.

In Section~\ref{s:lmin} we assume that $c>0$ is a number satisfying
\begin{equation} \label{eq:cc}
\sum_{i\geq 0} |\lambda_i|\leq c.
\end{equation}  
Then we can construct and calculate a sequence $\{q_{n,c}\}_{n\geq 0}$ such that
\begin{enumerate}[(i)]
\item \label{i:q1} $q_{n,c}$ is monotone increasing and $q_{n,c}\to \lambda_{\min}$ as $n\to \infty$;
\item \label{i:q2} $q_{n,c}\approx -\frac cn$ if $O$ is positive semidefinite;
\item \label{i:q3} $q_{n,c}\to \lambda_{\min}$ with super-exponential speed if $O$ is not positive semidefinite. 
\end{enumerate}
In Section~\ref{s:Sch} we demonstrate that we can effectively find values $c$ according to \eqref{eq:cc} for a large class of integral operators having \emph{polynomial Gaussian} kernels, see Definition~\ref{d:poly} for the precise notion. Finding an upper bound for the \emph{Schatten--von Neumann $1$-norm} $\sum_{i\geq 0} |\lambda_i|$ is a hard problem in itself. We do this by writing $O$ as the product of two, typically not self-adjoint Hilbert--Schmidt integral operators in different ways, and we apply H\"older's inequality \eqref{eq:OHolder}; finally, we are even able to optimize the result. This upper bound gives an automatic lower bound for the minimal eigenvalue and an upper bound for the negativity as well, see inequality \eqref{eq:Trmin} and its practical calculation in \eqref{eq:lminbound}. In Subsections~\ref{ss:Gauss} and \ref{ss:quad} we show this method in case of a Gaussian kernel, and a Gaussian kernel multiplied by a quadratic polynomial, respectively. 

In Subsection~\ref{ss:qnc} we define $q_{n,c}$ using the above defined quantities $e_k$ and $c$. We prove \eqref{i:q1} in Theorem~\ref{t:qmon}. This means that the sequence $q_{n,c}$ is monotone increasing and converges to $\lambda_{\min}$ no matter what the value of $c$ is, getting better and better, rigorous lower bounds during the process. We show \eqref{i:q2} with precise lower and upper bounds in Theorem~\ref{t:qc}, implying basically that $q_{n,c}$ depends on $c$ in a linear way if $O$ is positive semidefinite.  
For \eqref{i:q3} see Theorem~\ref{t:qexp}, where we prove a super-exponential speed of convergence 
\begin{equation*}
|q_{n,c}-\lambda_{\min}|\leq 
\exp\left[-\frac{|\lambda_{\min}|}{c} n \log(n)+\mathcal{O}(n)\right]
\end{equation*}
if $O$ is not positive semidefinite. We also calculate and compare our estimates $q_{n,0}$ and $q_{n,c}$ for a concrete operator in Figure~\ref{fig:nongaussian_qns}, obtaining that the sequence $q_{n,0}$ converges extremely fast in practice.

\section{Preliminaries}
Let $\iH$ be a separable, complex Hilbert space\footnote{We use the mathematical convention $\langle \lambda f, g\rangle=\lambda \langle f, g\rangle $ and $\langle f, \lambda g\rangle=\lambda^{*} \langle f, g\rangle$, where $\lambda^{*}$ denotes the conjugate of $\lambda$.} and let $O\colon \iH\to \iH$ be a bounded linear operator. We say that $\lambda\in \CC$ is an \emph{eigenvalue} of $O$ if there exists a nonzero $f\in \iH$ such that $Of=\lambda f$; then $f$ is an \emph{eigenvector} corresponding to $\lambda$. The set of eigenvectors 
\begin{equation*} 
E_{\lambda}=\{f\in \iH: Of=\lambda f\} 
\end{equation*}
is called the \emph{eigenspace} corresponding to $\lambda$. 
We call $O$ \emph{self-adjoint} if $O^\dagger=O$, which implies that all eigenvalues are real. We say that $O$ is \emph{positive semidefinite} if $\langle Of, f\rangle \geq 0$ for all $f\in \iH$. A positive semidefinite operator is always self-adjoint. We call $O$ \emph{Hilbert--Schmidt} if there exists an orthonormal basis $\{f_i\}_{i\geq 0}$\footnote{The notation $\{f_i\}_{i\geq 0}$ and $\{\lambda_i\}_{i\geq 0}$ wants to take into account that there might be only finitely many basis vectors $f_i$ and eigenvalues $\lambda_i$, respectively.} of $\iH$ such that $\sum_{i\geq 0} \langle Of_i, Of_i \rangle<\infty$. We say that $O$ is \emph{trace-class}\footnote{Linear operators satisfy the inclusion: $\text{trace-class} \subset \text{Hilbert--Schmidt}\subset \text{compact} \subset \text{bounded}$.} if there is an orthonormal basis $\{f_i\}_{i\geq 0}$ of $\iH$ such that $\sum_{i\geq 0} |\langle |O|f_i, f_i \rangle |<\infty$, where $|O|\defeq \sqrt{O^{\dagger}O}$. Then the \emph{trace} of $O$ is defined as 
\begin{equation*}
\Tr\{O\}\defeq \sum_{i\geq 0} \langle Of_i, f_i \rangle, 
\end{equation*} 
where the sum is absolutely convergent and independent of the orthonormal basis $f_i$, see \cite[Definition~9.3]{deGosson_book}.
Now assume that $O$ is self-adjoint and trace-class.
Every trace-class operator is compact by \cite[Proposition~9.7]{deGosson_book}. As $O$ is compact and self-adjoint, the spectral theorem \cite[Theorem~3 in Chapter~28]{lax2002functional} implies that it has countably many eigenvalues\footnote{Note that in the enumeration $\{\lambda_i\}_{i\geq 0}$ we list every eigenvalue $\lambda$ with multiplicity $\dim E_{\lambda}$.} $\{\lambda_i\}_{i\geq 0}$, and there exists an orthonormal basis consisting of eigenvectors. Choosing such a basis yields  
\begin{equation*} 
 \Tr \{O\}=\sum_{i\geq 0} \lambda_i. 
\end{equation*}
The main example in this paper is $\iH=L^2(\mathbb{R}^{d})$, that is, the Hilbert space of the complex-valued square integrable functions defined on $\RR^d$ endowed with the scalar product $\langle f,g \rangle=\int_{\RR^d} f(\mathbf{x})g^{*}(\mathbf{x}) \, \mathrm{d} \mathbf{x}$, where $z^{*}$ denotes the complex conjugate of $z$. A kernel $K \in L^2(\mathbb{R}^{2d})$ defines an integral operator $\widehat{K}\colon L^2(\mathbb{R}^{d})\to L^2(\mathbb{R}^{d})$ by the formula
\begin{equation*} 
 \big( \widehat{K} f \big) (\mathbf{x}) = \int_{\mathbb{R}^{d}}\, K(\mathbf{x},\mathbf{y}) f(\mathbf{y})\, \mathrm{d}\mathbf{y}. 
\end{equation*}
The operator $\widehat{K}$ is always Hilbert--Schmidt, so compact, see e.\,g.~\cite{deGosson_book}. Thus $\widehat{K}$ has countably many eigenvalues $\{\lambda_i\}_{i\geq 0}$, and each eigenvalue $\lambda$ satisfies a Fredholm integral equation
\begin{equation*} 
\int_{\RR^d} \, K(\mathbf{x},\mathbf{y}) f(\mathbf{y})\,\mathrm{d}\mathbf{y}=\lambda f(\mathbf{x})
\end{equation*} 
with some nonzero functions $f\in L^2(\mathbb{R}^d)$. 
If the kernel $K$ is continuous, then $\widehat{K}$ is self-adjoint if and only if $K(\mathbf{y},\mathbf{x})=K^*(\mathbf{x},\mathbf{y})$ for all $\mathbf{x},\mathbf{y}\in \RR^d$. It is a trickier question when $\widehat{K}$ becomes trace-class; somewhat surprisingly adding the natural-looking condition $K\in L^1(\RR^{2d})$ is not sufficient, see \cite[Subsection~9.2.2]{deGosson_book}. The \emph{Schwartz space} $\iS(\RR^d)$ is the set of smooth functions $f\colon \RR^d\to \CC$ with rapidly decreasing mixed partial derivatives, see e.\,g.~\cite[Section~V.3]{SR}. If $K\in \iS(\RR^{2d})$, then $\widehat{K}$ is trace-class, see \cite[Proposition~287]{Gosson_Harmonic} or \cite[Proposition~1.1]{Brislawn} with the remark afterwards. If $K$ is continuous and $\widehat{K}$ is trace-class, then \cite{Duflo,Brislawn} yield the useful formula
\begin{equation*} 
\Tr \big\{\widehat{K}\big\}=\int_{\RR^d} K(\mathbf{x},\mathbf{x}) \, \mathrm{d}\mathbf{x}.
\end{equation*} 
Note that $\widehat{K}$ is positive semidefinite if and only if all of its eigenvalues are greater than or equal to zero, which is equivalent to
\begin{equation*} 
\int_{\mathbb{R}^{d}} \int_{\mathbb{R}^{d}} K(\mathbf{x},\mathbf{y}) f(\mathbf{x})f^{*}(\mathbf{y})  \, \mathrm{d} \mathbf{x} \, \mathrm{d} \mathbf{y} \geq 0 \quad \text{for all } f \in  L^2(\mathbb{R}^d).
\end{equation*}
For more on Hilbert--Schmidt and trace-class operators see e.\,g.~\cite{Gosson_Harmonic,deGosson_book,Stein}. 

We say that $\widehat{\rho}$ is a \emph{density operator} if it is positive semidefinite with $\Tr \big\{\widehat{\rho}\big\}=1$, that is, we have $\lambda_i \geq 0$ and $\sum_{i\geq 0} \lambda_i=1$.

\begin{definition} \label{d:poly}
A \emph{Gaussian} kernel $K_G\in L^2(\RR^{2d})$ is of the form 
\begin{equation*} K_G(\mathbf{x},\mathbf{y})=\exp\left\{-\mathbf{r}^{\top}\mathbf{M}\mathbf{r}- \mathbf{V}^{\top}\mathbf{r} + F\right\},
\end{equation*}
where $\mathbf{r}=(\mathbf{x},\mathbf{y})^{\top}$, and $\mathbf{M}=\mathbf{M}_1+i\mathbf{M}_2\in \CC^{2d\times 2d}$ is a symmetric complex matrix such that $\mathbf{M}_1,\mathbf{M}_2\in \RR^{2d\times 2d}$ and $\mathbf{M}_1$ is positive definite, $\mathbf{V}\in \CC^{2d}$ is a complex vector, and $F\in \CC$ is a complex number. In particular, a \emph{self-adjoint Gaussian} kernel $K_G\in L^2(\RR^{2d})$ is of the form 
\begin{align*}
K_G(\mathbf{x},\mathbf{y})=&\exp \left\{-(\mathbf{x}-\mathbf{y})^{\top} \mathbf{A} (\mathbf{x}-\mathbf{y}) -i(\mathbf{x}-\mathbf{y})^{\top} \mathbf{B} (\mathbf{x}+\mathbf{y})\right.
\\
&\quad \quad \left. -(\mathbf{x}+\mathbf{y})^{\top} \mathbf{C}(\mathbf{x}+\mathbf{y})-i\mathbf{D}^{\top}(\mathbf{x}-\mathbf{y})-\mathbf{E}^{\top}(\mathbf{x}+\mathbf{y})+F\right\},
\end{align*} 
where $\mathbf{A},\mathbf{B},\mathbf{C}\in \RR^{d\times d}$ are real  matrices such that $\mathbf{A},\mathbf{C}$ are symmetric and positive definite, $\mathbf{E},\mathbf{D}\in \RR^d$, and $F\in \RR$. 
We call $K\in L^2(\RR^{2d})$ a \emph{polynomial Gaussian} kernel if
\begin{equation} \label{eq:poly}
K(\mathbf{x},\mathbf{y})=P(\mathbf{x},\mathbf{y})K_G(\mathbf{x},\mathbf{y}), 
\end{equation} 
where $P$ is a self-adjoint polynomial in $2d$ variables, and $K_G\in L^2(\RR^{2d})$ is a self-adjoint Gaussian kernel.
\end{definition}

\section{Approximating the spectrum}
\label{s:spectrum}
Recall that we use the following notation throughout the paper. Let $O\colon \iH\to \iH$ be a self-adjoint, trace-class linear operator acting on a separable, complex Hilbert space $\iH$. Then $O$ has only countably many real eigenvalues, let us enumerate them with multiplicities as $\{\lambda_i\}_{i\geq 0}$. As $O$ is trace-class, we have $\sum_{i\geq 0} |\lambda_i|<\infty$. If there are only finitely many eigenvalues $\lambda_0,\dots,\lambda_m$ then let $\lambda_i=0$ for each integer $i>m$.

\subsection{The quantities \texorpdfstring{$e_k$}{TEXT} and their calculation} \label{ss:ek}
Following \cite{Newton} let $e_0=1$ and for positive integers $k\geq 1$ define 
\begin{equation*}
e_k\defeq \sum_{0\leq i_1<\dots<i_k} \lambda_{i_1}\cdots \lambda_{i_k}.
\end{equation*}
It might be standard that the quantities $e_k$ are intimately connected to the positivity of the operator $O$, see \cite[Proposition~2.1]{Newton} for an exact reference. 
\begin{claim} The operator $O$ is positive semidefinite $\Longleftrightarrow $ $e_k\geq 0$ for all $k\geq 1$.
\end{claim}
Let $K\in L^2(\RR^{2d})$ be a self-adjoint kernel. By the Plemelj--Smithies formulas\footnote{See also Newton's identities or the Faddeev–LeVerrier algorithm.} we obtain that the values $e_k$ corresponding to the operator $\widehat{K}$ can be calculated by   
\begin{equation*} 
e_k =\frac{1}{k!}\left|\begin{array}{ccccc}
M_1  & 1   & 0  & \cdots       \\
M_2  & M_1 & 2  & 0  & \cdots \\
\vdots  &  & \ddots & \ddots   \\
M_{k-1} & M_{k-2} & \cdots & M_1    & k-1 \\
M_k     & M_{k-1} & \cdots & M_2    & M_1
\end{array}\right|,
\label{eq: e_k-k}
\end{equation*}
where the moments $\{M_\ell\}_{1\leq \ell\leq k}$ are defined as 
\begin{equation*}
M_\ell \defeq \sum_{i=0}^{\infty}\lambda_i^\ell=\Tr\big\{\widehat{K}^\ell\big\}
=\int_{\mathbb{R}^{\ell d}} \, \left[K(\mathbf{x}_\ell,\mathbf{x}_1) \prod_{i=1}^{\ell-1} K(\mathbf{x}_i,\mathbf{x}_{i+1}) \right]\, \mathrm{d}\mathbf{x}_1\cdots \mathrm{d}\mathbf{x}_\ell.
\label{eq: M_l-ek}
\end{equation*}
Also, it was demonstrated in \cite{Newton} that this calculation is effective for polynomial Gaussian kernels according to Definition~\ref{d:poly}.

Define $g\colon \RR\to \RR$ as 
 \begin{equation*} \label{eq:g(x)}
 g(x)\defeq \prod_{i=0}^{\infty} (1+\lambda_i x).
 \end{equation*}
By \cite[Lemma~3.3]{Simon} we obtain that $g(x)$ is finite for all $x\in \RR$ and
 \begin{equation} \label{eq:g} g(x)=\sum_{k=0}^{\infty} e_k x^k,
 \end{equation}
 see also \cite{Newton} for an elementary proof. Taking the algebraic expansion of $\left(\sum_{i=0}^{\infty} |\lambda_i|\right)^k$ yields that the sequence $e_k$ rapidly converges to $0$. More precisely, we obtain the following \cite[Lemma~3.3\,(3.4)]{Simon}:
 \begin{equation} \label{eq:ek}
 |e_k|\leq \frac{\left(\sum_{i=0}^{\infty}|\lambda_i| \right)^k}{k!}  
\quad \textrm{for all } k\geq 1.
 \end{equation}

\begin{definition}
\label{def:Lambda_n}
For each integer $n\geq 1$ define the set $\Lambda_n\subset \RR$ as
\begin{equation*} 
\Lambda_n= \left\{-\frac{1}{x}: x\in \RR \text{ and } g_n(x)=0\right\}, \quad \text{where } g_n(x)=\sum_{k=0}^n e_kx^k.
\end{equation*} 
\end{definition}

The main idea of this section is that $\Lambda_n$ will approximate the closure of the set of non-zero eigenvalues given by
\begin{equation*} 
\left\{-\frac 1x: x\in \RR \text{ and } g(x)=\sum_{k=0}^{\infty} e_k x^k=0\right\}.  
\end{equation*}
The connection between the two sets is that the polynomial $g_n$ is a truncation of the power series $g$. The more precise statement is based on the concept of Hausdorff metric, see Subsection~\ref{ss:theor} for more details.

\subsection{A Gaussian example} \label{ss:EGauss}
First we apply our method for a Gaussian operator, where our approximation $\Lambda_n$ can be directly compared to the exact spectrum. A self-adjoint Gaussian kernel $K_G\colon \RR^2 \to \CC$ with trace $1$ is of the form 
\begin{align}
\begin{split}
K_G(x,y)=N_0\exp&\left\{-A(x-y)^2-iB(x^2-y^2)-C(x+y)^2 \right.\\
&~ \left. -iD(x-y)-E(x+y)\right\},
 	\label{eq:rho_G}
\end{split}
\end{align}
with real parameters $A>0$, $C>0$, $B$, $D$, $E$, where $N_0=2\sqrt{\frac{C}{\pi}}
\exp \left[-\frac{E^2}{4C}\right]$ is a normalizing factor ensuring that $\Tr\big\{\widehat{K}_G\big\}$=1. Let $\beta=\frac{\sqrt{A}-\sqrt{C}}{\sqrt{A}+\sqrt{C}}$, then the eigenvalues are $\lambda_n=(1-\beta)\beta^n$ where $n\geq 0$, and the quantities $e_k$ can be calculated as
\begin{equation*}
e_k=\frac{(1-\beta)^k \beta^{\frac{k(k-1)}{2}}}{\prod_{i=1}^k (1-\beta^i)}.
\label{eq:gaussian_e_k}
\end{equation*}
In Table~\ref{t:tab_gauss} we considered the parameters $A=1$, $C=4$, $B=D=E=0$. 

\begin{table} 
\begin{center}
\begin{tabular}{ |c|c|c|c|}
\hline 
 $k$   & $\lambda_{k}$ & $\lambda^{(10)}_{k}$ & $r_{k} $\\
\hline
$0$ &  $1.333333333$ & $1.333333333$ & $3.86624\cdot 10^{-27}$ \\
\hline
$1$  &   $-0.4444444444$ & $-0.4444444444$ & $1.7122\cdot 10^{-22}$  \\
\hline
$2$  & $0.1481481481$ &  $0.1481481481$ & $3.79164\cdot 10^{-18}$ \\
\hline
$3$ & $ - 0.04938271605 $ & $ - 0.04938271605$ & $2.39836\cdot 10^{-14}$ \\
\hline
$4$ & $0.01646090535$ & $0.01646090536$ & $5.31401\cdot 10^{-11}$ \\
\hline
$5$ &  $ - 0.00548696845$ & $- 0.00548696824$ & $3.85098 \cdot10^{-8}$ \\
\hline
$6$ & $0.001828989483$ &$0.00182897225$ & $9.42254\cdot 10^{-6}$\\
\hline
$7$ & $ - 0.00060966316$  & $ - 0.00061011957$ & $0.000748632$ \\
\hline
$8$ & $0.00020322$ & $0.0002074$ & $0.0204092$ \\
\hline
$9$ & $ - 0.00006774$ & $ - 0.00005448$ & $0.195762$  \\
\hline
\end{tabular} 
\end{center}
\caption{The column $\lambda_k$ contains the first $10$ eigenvalues ordered from the largest to the smallest absolute value. The column $\lambda^{(10)}_k$ contains the elements of the approximation $\Lambda_{10}$ ordered similarly. The last column lists the relative errors $r_{k}=\frac{|\lambda_{k}-\lambda^{(10)}_{k}|}{|\lambda_{k}|}$. Note that, for example, the difference of the real $1$-norm and the one calculated from $\Lambda_{10}$ is just $\sum^{\infty}_{k=0}|\lambda_{k}|-\sum^{9}_{k=0}|\lambda^{(10)}_{k}|=0.000042545$.} 
\label{t:tab_gauss}
\end{table}

\subsection{A non-Gaussian example} \label{ss:NGauss}

 Now we test our estimate in case of an integral operator whose kernel is a quadratic polynomial multiplied by a Gaussian. We will show that 
determining the approximate eigenvalues for positive semidefinite kernels (i.e., for density operators) according to Definition~\ref{def:Lambda_n} allows us to accurately estimate the von Neumann entropy 
\begin{equation*}
S= - \sum^{\infty}_{i=0}\lambda_i \log{\lambda_i} 
\end{equation*}
by
\begin{equation*}
S^{(n)}= - \sum_{i}\lambda^{(n)}_i \log{\lambda^{(n)}_i},
\end{equation*}
where $\lambda^{(n)}_i$ denotes the $i$th eigenvalue in the $n$th approximation. Note that the polynomial $g_n(x)$ of order $n$ in Definition~\ref{def:Lambda_n} is the truncated version of the power series $g(x)$ from
Eq.~\eqref{eq:g}.
Here we set the Boltzmann constant to be $k_B=1$. For non-positive kernels we will also calculate the approximate  Schatten--von Neumann norm by
\begin{equation*}
\| \hat{\rho} \|_1 \approx \sum_{i} |\lambda^{(n)}_i|. 
\end{equation*}

Recall that a polynomial Gaussian kernel $K\colon \RR^2\to \CC$ is given by
\begin{equation*} 
K(x,y)=P(x,y) K_G(x,y),
\end{equation*}
where $K_G(x,y)$ is a self-adjoint Gaussian kernel with trace $1$ of the form \eqref{eq:rho_G} with real parameters $A,B,C,D,E$ such that $A,C>0$; in our example $P(x,y)$ is a self-adjoint quadratic polynomial given by 
\begin{align*} 
P(x,y)=\frac 1N &\left(A_P (x-y)^2+i B_P(x^2-y^2) + C_P (x+y)^2 \right. \\
&\, \left. +i D_P(x-y)+E_P(x+y) +F_P\right), 
\end{align*} 
with real parameters $A_P,B_P, C_P, D_P,E_P,F_P$ and normalizing factor
\begin{equation*} 
N=F_P + \frac{C_P-E_P E}{2C}+\frac{C_P E^2}{4 C^2}.
\end{equation*}

Tables~\ref{t:tab_polynomial_gauss_for_etropy} and \ref{t:tab_polynomial_gauss} show data for a positive semidefinite kernel, and a non-positive kernel, respectively. 
Tables~\ref{t:tab_polynomial_gauss_for_etropy} and \ref{t:tab_polynomial_gauss} indicate data for the von Neumann entropy and the Schatten--von Neumann norm for parameters shown in the captions. One can also compare these values with the more precise values obtained from direct diagonalization (See subsection \ref{ss:num}). The numbers of diagonalization in the captions of Tables~\ref{t:tab_polynomial_gauss_for_etropy} and \ref{t:tab_polynomial_gauss} clearly confirm correctness of the method and the fast convergence to the asymptotics. This fact is a very useful property in practice. These numerical results explicitly demonstrate that our method can accurately approximate key quantum informational quantities such as the von Neumann entropy and logarithmic negativity, even in non-Gaussian, physically motivated scenarios. This confirms the practical relevance of our approach in real-world quantum technologies, especially in continuous variable and noisy quantum systems. The quantities $e_k$ were calculated with \emph{Mathematica}.
%according to Subsection~\ref{ss:ek}. 
%Table~\ref{t:tab_polynomial_gauss} enumerates the values $\lambda^{(n)}_k$ for $n\in \{1,\dots,7\}$  and gives the Schatten--von Neumann norms  for $n=1,\dots,12$.  

%We considered the following example.  Choose the Gaussian parameters as $A=\frac 32$, $C=1$, $B=D=E=0$, and let the polynomial parameters be $A_{P}=-1$, $C_{P}=5$, $F_{P}=1$, and $B_{P}=D_{P}=E_{P}=0$. The quantities $e_k$ were calculated with \emph{Mathematica} according to Subsection~\ref{ss:ek}. Table~\ref{t:tab_polynomial_gauss} summarizes the values $\lambda^{(n)}_k$ for $n,k\in \{1,\dots,6\}$  and the Schatten--von Neumann norm  denotes the approximated eigenvalues in $\Lambda_{n}$ for  $n\in \{1,\dots,12\}$.  

\begin{table}
\[
\begin{array}{|c||c|c|c|c|c|c|}
	\hline
	n & \lambda_1 & \lambda_2 & \lambda_3 & \lambda_4 & \lambda_5 & \lambda_6 \\ \hline
1 & 1.0000000 & & & & & \\
2 & - & & & & & \\
3 & 0.6270753 & 0.3061496 & 0.0667750 & & & \\
4 & 0.6261790 & 0.3103465 & 0.0538776 & 0.0095969 & & \\
5 & 0.6261805 & 0.3103324 & 0.0541775 & 0.0080588 & 0.0012508 & \\
6 & 0.6261805 & 0.3103324 & 0.0541768 & 0.0080884 & 0.0010669 & 0.0001550\\
7 & 0.6261805 & 0.3103324 & 0.0541768 & 0.0080883 & 0.0010701 & 0.0001333 \\
\hline
\end{array}
\]
\\

\[
\begin{array}{|c||c|c|c|c|c|c|}
\hline
	n & 1 & 2 & 3 & 4 & 5 & 6 \\
	\hline
    S^{(n)} & 0.0000000 & - & 0.8357542 & 0.8582210 & 0.8614143 & 0.8618263 \\
\hline\hline
	n & 7 & 8 & 9 & 10 & 11 & 12 \\
	\hline  
    S^{(n)} & 0.8618767 & 0.8618827 & 0.8618834 & 0.8618835 & 0.8618835 & 0.8618835 \\
    \hline
\end{array}
\]
\caption{Eigenvalues (upper table) and von Neumann's entropy (lower table) obtained from different order of approximation $n$ for the polynomial Gaussian kernel $K(\mathbf{x},\mathbf{y})$. We chose the Gaussian parameters as $A=\frac 32$, $C=1$, $B=D=E=0$, and let the polynomial parameters be $A_{P}=-1$, $C_{P}=1$, $F_{P}=1$, and $B_{P}=D_{P}=E_{P}=0$. The von Neumann's entropy from diagonalization is: $S=0.86188346992$.}
\label{t:tab_polynomial_gauss_for_etropy}
\end{table}

\begin{table}
\[
\begin{array}{|c||c|c|c|c|c|c|}
	\hline
	n & \lambda_1 & \lambda_2 & \lambda_3 & \lambda_4 & \lambda_5 & \lambda_6 \\ \hline
1 & 1.0000000 & & & & &  \\
2 & 0.5253423 & 0.4746577 & & & &  \\
3 & 0.5747363 & 0.4133032 & 0.0119604 & & & \\
4 & 0.5746924 & 0.4133888 & 0.0110457 & 0.0008731 & & \\
5 & 0.5746912 & 0.4133922 & 0.0085830 & 0.0062595 & -0.0029260 & \\
6 & 0.5746912 & 0.4133922 & 0.0086980 & 0.0060556 & -0.0030094 & 0.0001724 \\
7 & 0.5746912 & 0.4133922 & 0.0086978 & 0.0060563 & -0.0030100 & 0.0001458 \\
\hline
%	  1 & 1.00000 & & & & &  \\	
%	2 & 0.525217 & 0.474783 & & & &  \\
%	3 & 0.574737  &  0.413287 & 0.0119764 &   &   &  \\	
%	4 & 0.574692  & 0.413374  & 0.0110462  & 0.00088780 &   &    \\
%	5 & 0.574691  & 0.413377  & 0.0085829  & 0.00626675  & -0.00291773  &      \\
%	6 & 0.574691  & 0.413377  & 0.0086981  &  0.00606272 & -0.00300139  & 0.000172445     \\
%	7 & 0.574691  &  0.413377 &  0.0086978 & 0.00606346  &  -0.00300199 & 0.000145851  & 0.0000267351  \\	
%	\hline
\end{array}
\]
\\

\[
\begin{array}{|c||c|c|c|c|c|c|}
	\hline
	n & 1 & 2 & 3 & 4 & 5 & 6 \\
	\hline
\|\hat{\rho}\|_1 & 1.0000000 & 1.0000000 & 1.0000000 & 1.0000000 & 1.0058519 & 1.0060189 \\
%	\|\hat{\rho}\|_1 & 1.000000 & 1.000000 & 1.000000 & 1.000000 & 1.0058355 & 1.0060028 \\
%	0.605432 & 0.837792 & 1. & 1. & 1. & 1. \\
	\hline\hline
	n & 7 & 8 & 9 & 10 & 11 & 12 \\
	\hline
\|\hat{\rho}\|_1 & 1.0060201 & 1.0060201 & 1.0060201 & 1.0060201 & 1.0060201 & 1.0060201 \\
%	\|\hat{\rho}\|_1 & 1.006004 & 1.006004 & 1.006004  & 1.006004  & 1.006004  &  1.006004  \\ 
%	&  1. & 1. & 1. & 1. & 1. & 1.\\
	\hline
\end{array}
\]
\caption{Eigenvalues (upper table) and Schatten--von Neumann norm (lower table) obtained from different order of approximation $n$ for the polynomial Gaussian kernel $K(\mathbf{x},\mathbf{y})$. We chose the Gaussian parameters as $A=\frac 32$, $C=1$, $B=D=E=0$, and let the polynomial parameters be $A_{P}=-1$, $C_{P}=5$, $F_{P}=1$, and $B_{P}=D_{P}=E_{P}=0$. The Schatten--von Neumann norm and the only negative eigenvalue $\lambda_5$ from diagonalization are: $\|\hat{\rho}\|_1=1.00602007614$, $\lambda_5=-0.00301003807$ .}
\label{t:tab_polynomial_gauss}
\end{table}

\subsection{A theoretical result} \label{ss:theor}
The main goal of this subsection is to show that the set $\Lambda_n$ defined in Definition~\ref{def:Lambda_n} converges to the closure of the set of non-zero eigenvalues. Indeed, as the quantities $e_k$ from Subsection~\ref{ss:ek} carry no information about the zero eigenvalues, we need to implement a mild modification to the set of eigenvalues. Let $\Lambda\subset \RR$ denote the closure of the set of non-zero $\lambda_i$, that is,  
\begin{equation*}
\Lambda=\overline{\{\lambda_i: i\geq 0\}\setminus\{0\}}.
\end{equation*}
This means that $0\in \Lambda$ if and only if there are infinitely many non-zero $\lambda_i$.

\begin{definition} \label{d:Haus} 
For two non-empty compact sets $K_1,K_2\subset \RR$ let us define their \emph{Hausdorff distance} as 
\begin{equation*}
\dH(K_1,K_2)\defeq \inf\{r: K_1\subset B(K_2,r) \text{ and } K_2\subset B(K_1,r) \},  
\end{equation*}
where 
\begin{equation*} B(A,r)=\{x\in \RR: \exists y\in A \text{ such that } |x-y|<r\}.
\end{equation*} 
Then the non-empty compact subsets of $\RR$ endowed with the \emph{Hausdorff metric} $\dH$ form a complete, separable metric space, see \cite{kechris} for more information. 
\end{definition}

The main goal of this subsection is to prove the following theorem. 

\begin{theorem} \label{t:conv}
Assume that each non-zero $\lambda_i$ has multiplicity $1$. Then $\Lambda_n\to \Lambda$ in Hausdorff metric.\footnote{This means that, in particular, for large $n$ the set $\Lambda_n$ is non-empty.} 
\begin{comment}
Moreover, if
\begin{equation*} 
\inf\left\{\frac{|\lambda_i-\lambda_j|}{|\lambda_i|}: i\neq j, |\lambda_i|\geq |\lambda_j|, \lambda_i\neq 0 \right\}>0
\end{equation*} 
then
\begin{equation*} 
\dH(\Lambda_n, \Lambda)\leq \exp[-n \log(n)+\mathcal{O}(n)]. 
\end{equation*}
\end{comment}
\end{theorem}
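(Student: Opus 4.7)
The plan is to translate Hausdorff convergence of $\Lambda_n$ to $\Lambda$ into statements about the roots of the polynomials $g_n$ and of the entire function $g=\sum_{k\ge 0}e_k x^k$, with Hurwitz's theorem as the main tool. Two a priori inputs drive the argument. By \eqref{eq:ek} we have $|e_k|\le C^k/k!$ with $C=\sum_i|\lambda_i|$, so the tail estimate $|g(x)-g_n(x)|\le \sum_{k>n}(C|x|)^k/k!$ shows $g_n\to g$ uniformly on compact subsets of $\CC$. Also $g_n(0)=1$ together with $|g_n(x)-1|\le e^{C|x|}-1$ gives $|x|\ge (\log 2)/C$ for any root $x$ of any $g_n$, hence the uniform bound $\Lambda_n\subset [-C/\log 2,\,C/\log 2]$.

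To prove $\Lambda\subset B(\Lambda_n,\epsilon)$ for large $n$, I would split $\Lambda$ into its large and small parts. The simplicity hypothesis makes each zero $-1/\lambda_i$ of $g$ simple; by Hurwitz's theorem, in a small disk around such a zero $g_n$ has exactly one root for large $n$, and that root is real since $g_n$ has real coefficients and non-real roots come in conjugate pairs. Applying this to the finitely many $\lambda_i$ with $|\lambda_i|\ge \epsilon/2$ handles them. The remaining $\lambda_i$ lie in $B(0,\epsilon/2)$, so it suffices (in the infinite-rank case, where $0\in\Lambda$) to produce a point of $\Lambda_n$ within $\epsilon/2$ of $0$. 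Since $g$ then has infinitely many zeros, infinitely many $e_k$ are nonzero, so $\deg g_n\to\infty$. Choose $R$ with $1/R<\epsilon/2$ and $|{-1/\lambda_i}|\neq R$ for all $i$. Hurwitz applied on the disk $\{|x|\le R\}$ gives that $g_n$ has exactly $N(R):=\#\{i:|{-1/\lambda_i}|\le R\}$ roots there for large $n$, so $\deg g_n-N(R)\ge 1$ roots lie outside, yielding an element of $\Lambda_n$ of modulus at most $1/R<\epsilon/2$. In the finite-rank case $g$ is a polynomial of degree equal to the rank, hence $g_n=g$ and $\Lambda_n=\Lambda$ for all large $n$.

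For the reverse inclusion $\Lambda_n\subset B(\Lambda,\epsilon)$ I would argue by contradiction. Assume $y_n\in\Lambda_n$ satisfies $\dist(y_n,\Lambda)\ge\epsilon$; the uniform bound above lets me pass to a subsequence with $y_n\to y$. Writing $y_n=-1/x_n$: if $y\neq 0$ then $x_n\to -1/y$ is finite, and uniform convergence forces $g(-1/y)=\lim_n g_n(x_n)=0$, so $y\in\Lambda$, a contradiction. If $y=0$ we must be in the infinite-rank case (since $\Lambda_n=\Lambda$ for large $n$ in finite rank), whence $0\in\Lambda$ and again $\dist(y_n,\Lambda)\to 0$, contradicting the assumption.

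The main obstacle is bookkeeping the \emph{spurious} roots of $g_n$: as $\deg g_n\to\infty$ but $g$ has finitely many zeros in each bounded region, most roots of $g_n$ must escape to infinity, which corresponds under $y=-1/x$ to points of $\Lambda_n$ accumulating at $0$. This is harmless exactly when $0\in\Lambda$, and it is precisely the degree-counting step that is needed to guarantee the escape actually takes place rather than, say, the roots disappearing discontinuously as $n$ grows.
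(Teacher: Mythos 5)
Your argument is structurally sound and takes a genuinely different route from the paper: you work over $\CC$ and invoke Hurwitz's theorem, where the paper stays on the real line and argues via sign changes of $g_n$ near $-1/\lambda_i$ together with the intermediate value theorem. The Hurwitz step around each simple real zero $-1/\lambda_i$, combined with the conjugate-pair observation to force the nearby root of $g_n$ to be real, is correct, and the contradiction argument for $\Lambda_n\subset B(\Lambda,\epsilon)$ is fine.

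The gap is in the degree-counting step you use to produce a point of $\Lambda_n$ near $0$. You correctly deduce that for large $n$ the polynomial $g_n$ has exactly $N(R)$ roots in $\{|z|\le R\}$ while $\deg g_n>N(R)$, so at least one root lies outside that disk. But $\Lambda_n$ is defined using only the \emph{real} roots of $g_n$, and nothing forces the escaping roots to be real: the extra roots of a truncated power series can perfectly well be a conjugate pair off the real axis (the conjugate-pair symmetry tells you they come in pairs, not that any are real). So the step ``yielding an element of $\Lambda_n$ of modulus at most $1/R$'' does not follow. The fix is easy and makes the degree count unnecessary: since $\lambda_i\to 0$ in the infinite-rank case, pick $j$ with $0<|\lambda_j|<\epsilon/2$ and run the same Hurwitz-plus-conjugate-pair argument in a small disk around the simple real zero $-1/\lambda_j$ of $g$; this hands you a \emph{real} root of $g_n$ near $-1/\lambda_j$, hence a point of $\Lambda_n$ within $\epsilon/2$ of $0$. (This is in effect what the paper does: it chooses its finite $\epsilon$-net of $\Lambda$ to include some eigenvalues close to $0$, so $0$ is covered automatically without any separate escape-to-infinity argument.)
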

\begin{proof}
First assume that there are only finitely many non-zero $\lambda_i$, say $\lambda_0, \dots, \lambda_{k-1}$. Then clearly $g_n(x)=g(x)$ for all $n\geq k$ for each $x$. Hence $\Lambda_n=\Lambda$ for all $n\geq k$.  

Now assume that there are infinitely many non-zero $\lambda_i$. Fix an arbitrary $\varepsilon>0$, we need to prove that there exists $N\in \NN$ such that $\dH(\Lambda_n, \Lambda)\leq \varepsilon$ for all $n\geq N$. Define 
\begin{equation*} 
K_{\varepsilon}=\left\{-\frac 1x: x\in \RR\setminus B(\Lambda,\varepsilon)\right\}.
\end{equation*}
Clearly $K_{\varepsilon}$ is closed, and $0\in \Lambda$ implies that $K_{\varepsilon}\subset [-1/\varepsilon, 1/\varepsilon]$, so $K_{\varepsilon}$ is compact. As $g$ is continuous and non-zero on $K_{\varepsilon}$, we can define $\theta\defeq \min\{|g(x)|: x\in K_{\varepsilon}\}>0$. Since $g_n\to g$ uniformly on $K_{\varepsilon}$, there exists $N_1\in \NN$ such that $|g_n(x)-g(x)|<\theta$ for all $n\geq N_1$ and $x\in K_{\varepsilon}$. Therefore, the definition of $\theta$ yields $K_{\varepsilon}\cap \{x: g_n(x)=0\}=\emptyset$ for all $n\geq N_1$. Hence the definitions of $K_{\varepsilon}$ and $\Lambda_n$ imply for each $n\geq N_1$ that
\begin{equation} \label{eq:cont1}
\Lambda_n\subset B(\Lambda,\varepsilon).
\end{equation}
Now choose finitely many non-zero $\lambda_i$ (we may assume that $\lambda_0,\dots, \lambda_{k-1}$) and $\delta>0$ such that 
\begin{equation} \label{eq:ldelta} 
[\lambda_i-\delta, \lambda_i+\delta]\cap \Lambda=\{\lambda_i\} \quad \text{for all } 0\leq i<k,
\end{equation} 
and for any points $x_i\in [\lambda_i-\delta, \lambda_i+\delta]$ we have 
\begin{equation} \label{eq:ldelta2} 
\Lambda\subset B\left(\cup_{i=0}^{k-1} \{x_i\},\varepsilon\right).
\end{equation} 
Define a $2k$ element set 
\begin{equation*} 
C\defeq \left\{\frac{1}{-\lambda_i\pm \delta}: 0\leq i<k\right\},
\end{equation*} 
and let $I_i$ be the open interval with endpoints $1/(-\lambda_i\pm\delta)$ for any $0\leq i<k$. As $\lambda_i$ has multiplicity $1$, \eqref{eq:ldelta} yields that $g(\frac{1}{-\lambda_i-\delta})$ and $g(\frac{1}{-\lambda_i+\delta})$ are non-zero and have opposite signs for all $i$. As $g_n\to g$ uniformly on $C$, we infer that there exists $N_2\in \NN$ such that for each $n\geq N_2$ the values $g_n(\frac{1}{-\lambda_i-\delta})$ and $g_n(\frac{1}{-\lambda_i+\delta})$ are non-zero and have opposite signs for all $0\leq i<k$. Thus for any $n\geq N_2$ for all $0\leq i<k$ by the continuity of $g_n$ we can define $z_{i,n}\in I_i$ such that $g_n(z_{i,n})=0$. Let $x_{i,n}=-1/z_{i,n}$, then clearly $\{x_{i,n}: 0\leq i<k\}\subset \Lambda_n$, and $x_{i,n}\in (\lambda_i-\delta, \lambda_i+\delta)$ for all $0\leq i<k$. Then \eqref{eq:ldelta2} implies that 
\begin{equation} \label{eq:cont2}
\Lambda\subset B(\{x_{i,n}: 0\leq i<k\},\varepsilon) \subset B(\Lambda_n,\varepsilon).
\end{equation} 
Finally, let $N=\max\{N_1,N_2\}$, then \eqref{eq:cont1} and \eqref{eq:cont2} imply for all $n\geq N$ that
\begin{equation*}
\dH(\Lambda_n,\Lambda)\leq \varepsilon. 
\end{equation*} 
As $\varepsilon>0$ was arbitrary, the proof is complete. 
\end{proof}

\section{A non-monotone approximation of the minimal eigenvalue} \label{s:qn0}

We apply the method of Section~\ref{s:spectrum} to estimate the minimal eigenvalue. Let the operator $O$ and its real eigenvalues $\lambda_i$ be the same as in Section~\ref{s:spectrum}. Recall that $\lambda_{\min}=\min\{0,\lambda_i: i\geq 0\}$, and note that
 \begin{equation} \label{eq:lmin} \lambda_{\min}=\frac{-1}{\min\{x>0: g(x)=0\}}, 
 \end{equation} 
 where we use the convention $\min \emptyset=+\infty$. Let us recall the following definition. 

\begin{definition}
For all integers $n\geq 1$ define 
\begin{equation} \label{eq:qn0}
q_{n,0}\defeq \frac{-1}{\min \left\{x>0: g_n(x)=0\right\}}, \quad \text{where } g_n(x)=\sum_{k=0}^n e_kx^k.
\end{equation} 
\end{definition}

We consider the approximation of $\lambda_{\min}$ by $q_{n,0}$. In the next theorem we prove that $q_{n,0}\to \lambda_{\min}$ with super-exponential speed under the slight technical condition that $\lambda_{\min}<0$ has multiplicity one. First we need the following easy fact. 

\begin{fact} \label{f:tail} 
Let $n\geq 0$ be an integer and 
let $0\leq x\leq \frac{n+2}{2}$. Then 
\begin{equation*}
\sum_{k=n+1}^{\infty} \frac{x^k}{k!}\leq \frac{2x^{n+1}}{(n+1)!}
\end{equation*} 
\end{fact}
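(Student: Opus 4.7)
The plan is to factor out the first term of the tail and compare the remaining series against a geometric series with ratio $1/2$.

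First I would write
\begin{equation*}
\sum_{k=n+1}^{\infty} \frac{x^k}{k!}=\frac{x^{n+1}}{(n+1)!}\sum_{j=0}^{\infty} \frac{x^j (n+1)!}{(n+1+j)!},
\end{equation*}
and note that for each $j\geq 0$ the ratio between consecutive factors inside the sum is
\begin{equation*}
\frac{x}{n+2+j}\leq \frac{x}{n+2}.
\end{equation*}
Under the hypothesis $0\leq x\leq (n+2)/2$, this ratio is at most $1/2$, so by induction on $j$ each term satisfies $\frac{x^j(n+1)!}{(n+1+j)!}\leq 2^{-j}$.

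Summing the resulting geometric series gives
\begin{equation*}
\sum_{k=n+1}^{\infty} \frac{x^k}{k!}\leq \frac{x^{n+1}}{(n+1)!}\sum_{j=0}^{\infty} 2^{-j}=\frac{2 x^{n+1}}{(n+1)!},
\end{equation*}
which is the desired inequality. There is no real obstacle here; the only thing to be mindful of is that the ratio bound requires $x\leq (n+2)/2$ rather than a smaller value, which is exactly the assumption given, so the geometric majorization is tight up to the factor $2$ in the conclusion.
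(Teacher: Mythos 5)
Your proof is correct and takes essentially the same approach as the paper: factor out $x^{n+1}/(n+1)!$ and dominate the remaining series by a geometric one with ratio $x/(n+2)\leq 1/2$. The paper bounds each term directly by $(x/(n+2))^j$ and sums that geometric series, while you bound each term by $2^{-j}$ via consecutive ratios and then sum, but this is only a cosmetic difference.
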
 

\begin{proof}
We can use a geometric sum for the estimate as 
\begin{equation*} 
\sum_{k=n+1}^{\infty}  \frac{x^k}{k!}\leq \frac{x^{n+1}}{(n+1)!}\sum_{i=0}^{\infty} \left(\frac{x}{n+2}\right)^i\leq \frac{2x^{n+1}}{(n+1)!}. \qedhere
\end{equation*}
\end{proof}

\begin{theorem} \label{t:c=0}
Assume that $\lambda_{\min}<0$ has multiplicity one. Then $q_{n,0}\to \lambda_{\min}$. Moreover, 
\begin{equation} \label{eq:pn}
|q_{n,0}-\lambda_{\min}|\leq \exp[-n \log(n)+\mathcal{O}(n)]. 
\end{equation}
\end{theorem}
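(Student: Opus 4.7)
The plan is to translate (\ref{eq:pn}) into a bound on the smallest positive zero $x_n$ of the polynomial $g_n$, then locate $x_n$ by exploiting that the corresponding zero $x_*\defeq 1/|\lambda_{\min}|$ of $g$ is simple. Because each factor $(1+\lambda_i x)$ is positive for $x\in[0,x_*)$, we have $g>0$ on $[0,x_*)$ and $g(x_*)=0$. The multiplicity-one hypothesis means that exactly one factor vanishes at $x_*$, so $x_*$ is a simple zero of $g$ with $g'(x_*)<0$. Once $|x_n-x_*|\leq\exp[-n\log n+\mathcal{O}(n)]$ is established, (\ref{eq:pn}) follows from the identity
\begin{equation*}
|q_{n,0}-\lambda_{\min}|=\left|\frac{1}{x_*}-\frac{1}{x_n}\right|=\frac{|x_n-x_*|}{x_n x_*},
\end{equation*}
since $x_n$ stays bounded away from $0$ for large $n$.

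The key input is a pair of one-sided linear estimates around $x_*$: there are positive constants $c_1,c_2,\delta_0$ such that $g(x)\geq c_1(x_*-x)$ for all $x\in[0,x_*]$, and $g(x_*+t)\leq -c_2 t$ for all $t\in[0,\delta_0]$. The first follows from observing that $h(x)\defeq g(x)/(x_*-x)$ extends continuously to the compact interval $[0,x_*]$ with $h(x_*)=-g'(x_*)>0$, and is therefore bounded below by a positive constant; the second is Taylor expansion at the simple zero. I would then combine these with a uniform tail bound: setting $c=\sum_{i\geq 0}|\lambda_i|$ and $R=x_*+\delta_0$, inequality \eqref{eq:ek} together with Fact~\ref{f:tail} yields, for all $n$ large enough that $cR\leq(n+2)/2$,
\begin{equation*}
\sup_{x\in[0,R]}|g_n(x)-g(x)|\leq\sum_{k=n+1}^{\infty}\frac{(cR)^k}{k!}\leq\frac{2(cR)^{n+1}}{(n+1)!}=\exp[-n\log n+\mathcal{O}(n)],
\end{equation*}
where the last equality uses Stirling.

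With these ingredients I set $\delta_n=K(cR)^{n+1}/(n+1)!$ for a constant $K=\max(3/c_1,3/c_2)$; for $n$ large $\delta_n<\min(\delta_0,x_*/2)$, and the uniform tail bound is at most $(2/K)\delta_n$. On $[0,x_*-\delta_n]$ the linear lower bound gives $g(x)\geq c_1\delta_n$, hence $g_n(x)\geq c_1\delta_n-(2/K)\delta_n>0$, so $g_n$ has no zero in this interval. At $x_*+\delta_n$ one gets $g_n(x_*+\delta_n)\leq g(x_*+\delta_n)+(2/K)\delta_n\leq -c_2\delta_n+(2/K)\delta_n<0$. The intermediate value theorem produces a root of $g_n$ in $(x_*-\delta_n,x_*+\delta_n)$, and since $g_n$ is strictly positive on $[0,x_*-\delta_n]$ this root must be $x_n$. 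Hence $|x_n-x_*|\leq\delta_n=\exp[-n\log n+\mathcal{O}(n)]$, which combined with the identity in the first paragraph yields (\ref{eq:pn}), and in particular $q_{n,0}\to\lambda_{\min}$.

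The main obstacle I anticipate is guaranteeing that the root produced by the intermediate value theorem is the \emph{smallest} positive root of $g_n$, not merely some root close to $x_*$. This requires the global linear lower bound $g(x)\geq c_1(x_*-x)$ on the full interval $[0,x_*]$, which in turn relies on both the positivity of $g$ throughout $[0,x_*)$ (so that no spurious zeros of $g$ can trigger spurious zeros of $g_n$ in that range) and the simple-zero structure at $x_*$ (so that $g$ vanishes only linearly there). Everything else is routine; the super-exponential rate $\exp[-n\log n+\mathcal{O}(n)]$ is built into the factorial decay of $|e_k|$ provided by \eqref{eq:ek}.
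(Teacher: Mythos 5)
Your proof is correct and follows essentially the same route as the paper's. The paper also proceeds by locating the smallest positive root of $g_n$ near $x_*=1/|\lambda_{\min}|$: it writes $g(x)=|\lambda_0|(x_*-x)\prod_{i\geq 1}(1+\lambda_i x)$ and sets $\Delta=\min_{[0,\theta_1]}|\lambda_0|\prod_{i\geq 1}(1+\lambda_i x)$, which is exactly your function $h(x)=g(x)/(x_*-x)$ and its positive lower bound $c_1$; it then uses the same factorial tail bound from \eqref{eq:ek} and Fact~\ref{f:tail}, the same sign argument with the intermediate value theorem on $[x_*-\varepsilon_n,\,x_*+\varepsilon_n]$, and the same reciprocal identity to conclude. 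The only cosmetic difference is that you obtain the one-sided linear bound to the right of $x_*$ by a local Taylor expansion, whereas the paper reads it off the same two-sided inequality $|g(x_*+s)|\geq \Delta|s|$ on $[0,\theta_1]$; both are valid. You also correctly flag, and resolve, the key point that the global lower bound on $[0,x_*]$ is needed to rule out a spurious smaller positive root of $g_n$.
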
 

\begin{proof} 
First we prove that $q_{n,0}\to \lambda_{\min}$. Assume that $\lambda_0=\lambda_{\min}$ and fix $\gamma_0$ such that $\lambda_0<\gamma_0<\min\{0, \lambda_i: i\geq 1\}$. Define 
\begin{equation*} 
\theta_0=-\frac{1}{\lambda_0} \quad \text{and} \quad \theta_1=-\frac{1}{\gamma_0}.
\end{equation*}
Since $g$ has no root between $\theta_0$ and $\theta_1$, and $\theta_0$ is the smallest root of $g$ with multiplicity $1$, we obtain that $g(x)>0$ for all $0\leq x<\theta_0$ and $g(x)<0$ for all $\theta_0<x\leq \theta_1$. Fix an arbitrary  $0<\varepsilon<\min\{\theta_0,\theta_1-\theta_0\}$, and define $\delta>0$ as 
\begin{equation*} 
\delta\defeq \min\{|g(x)|: x\in [0, \theta_0-\varepsilon]\cup\{\theta_0+\varepsilon \}  \}. 
\end{equation*}  
Since $g_n$ uniformly converges to $g$ on the interval $[0, \theta_0+\varepsilon]$, we obtain that there exists an integer $N=N(\varepsilon)\geq 0$ such that 
\begin{equation} \label{eq:hgdelta} |g_n(x)-g(x)|<\delta \quad \text {for all } x\in  [0, \theta_0+\varepsilon] \text{ and } n\geq N.
\end{equation}
The definition of $\delta$ and \eqref{eq:hgdelta} imply that for all $n\geq N$ we have $g_n(x)>0$ for all $x\in [0, \theta_0-\varepsilon]$ and $g_n(\theta_0+\varepsilon)<0$. Hence $\min \left\{x>0: g_n(x)=0\right\}\in [\theta_0-\varepsilon, \theta_0+\varepsilon]$ for all $n\geq N$. Taking the limit $\varepsilon \to 0+$ yields that $ \min \left\{x>0: g_n(x)=0\right\}\to \theta_0$ as $n\to \infty$, which implies that $q_{n,0}\to \lambda_0$ by definition. 

Now we prove the upper bound \eqref{eq:pn}. Let $F(x)=|\lambda_0| \prod_{i\geq 1} (1+\lambda_i x)$, and let 
\begin{equation*}
\Delta\defeq \min\left\{F(x): 0\leq x\leq \theta_1\right\}.
\end{equation*}
As $F$ is continuous and positive on the interval $[0,\theta_1]$, we obtain that $\Delta>0$. Consider $x=\theta_0+s$, where $s\in [-\theta_0,\theta_1-\theta_0]$. Then the definition of $\Delta$ yields
\begin{equation}
\label{eq:glin}
|g(x)|=\left|\prod_{i=0}^{\infty} (1+\lambda_ix)\right|=|s \lambda_0|  \prod_{i\geq 1} (1+\lambda_i x)\geq  \Delta|s|.
\end{equation}
Define 
\begin{equation*} 
L\defeq \sum_{i=0}^{\infty} |\lambda_i|,
\end{equation*} 
and assume that $n>2 \theta_1 L$. Then $Lx\leq L\theta_1<\frac n2$ for every $x\in [0,\theta_1]$. Thus \eqref{eq:ek}, Fact~\ref{f:tail}, and the inequality $n!\geq e\left(\frac{n}{e}\right)^{n}$ imply that for all $0\leq x\leq \theta_1$ we have  
\begin{equation} \label{eq:hg}  |g_n(x)-g(x)|=\left|\sum_{k=n+1}^{\infty} e_kx^k\right|\leq \sum_{k=n+1}^{\infty} \frac{(Lx)^k}{k!}\leq 2\frac{(Lx)^{n+1}}{(n+1)!}< \left(\frac{eL\theta_1}{n+1}\right)^{n+1}. 
\end{equation}
Now define $\varepsilon_n>0$ as 
\begin{equation} \label{eq:eps} 
\varepsilon_n \defeq \frac{1}{\Delta}\left(\frac{eL\theta_1}{n+1}\right)^{n+1}. 
\end{equation} 
Then \eqref{eq:glin}, \eqref{eq:hg}, and \eqref{eq:eps} yield that for all large enough $n$ we have $g_n(x)>0$ for all $0\leq x\leq \theta_0-\varepsilon_n$ and $g_n(\theta_0+\varepsilon_n)<0$; 
so we can define 
\begin{equation*} r_n\defeq \min\{x>0: g_n(x)=0 \}\in [\theta_0-\varepsilon_n, \theta_0+\varepsilon_n].
\end{equation*} 
Then $r_n\in [\theta_0-\varepsilon_n, \theta_0+\varepsilon_n]$, and the definition of $\varepsilon_n$ with some straightforward analysis imply that for all large enough $n$ we have
\begin{equation*} 
|q_{n,0}-\lambda_0|=\left|\frac{1}{\theta_0}-\frac{1}{r_n}\right|\leq \left|\frac{1}{\theta_0}-\frac{1}{\theta_0-\varepsilon_n}\right|\leq 2|\lambda_0|^2 |\varepsilon_n|=\exp[-n \log(n)+\mathcal{O}(n)].
\end{equation*} 
The proof of the theorem is complete.
\end{proof} 

Figure~\ref{fig:nongaussian_qns} will compare this approximation with the one in Section~\ref{s:lmin}. 

\begin{remark}
Note that the proof of an analogous theorem would also work if we only assumed that $\lambda_{\min}<0$ has \emph{odd} multiplicity $m$ in the sequence $\{\lambda_i\}_{i\geq 0}$, with the only difference that $n\log(n)$ need to be replaced by $\frac 1m n\log(n)$ in \eqref{eq:pn}. However, having multiplicity bigger than one does not seem to be a natural condition for the minimal eigenvalue of a linear operator. 
\end{remark}

\section{Approximation of the Schatten--von Neumann 1-norm} \label{s:Sch} 
The main goal of this section is to give a method which can provide rigorous upper bounds for the Schatten--von Neumann $1$-norm, and to demonstrate that it works effectively for polynomial Gaussian operators, see Definition~\ref{d:poly}. We use this  section as a preparation for Section~\ref{s:lmin}, but it can be interesting in its own right. Let $\iH$ be a separable, complex Hilbert space. 
For the following definition see for example \cite[Chapter~VI]{SR}.
\begin{definition} Assume that $O\colon \iH\to \iH$ is a Hilbert--Schmidt operator. Let $\{s_i\}_{i\geq 0}$ be the \emph{singular values} of $O$, that is, the sequence of the square roots of the eigenvalues of $OO^{\dagger}$ with multiplicities. Then we can define the finite $2$-norm of $O$ as 
\begin{equation*}
\|  O \|_2 \defeq \Bigg( \sum_{i\geq 0} s_{i}^2 \Bigg)^{\frac{1}{2}}. 
\end{equation*}
If $O$ is trace-class, we can define its \emph{Schatten--von Neumann $1$-norm} as
\begin{equation*}
\|  O \|_1 \defeq \sum_{i\geq 0} s_{i}. 
\end{equation*}
\end{definition}
By \cite[Theorem~VI.23]{SR} for $K\in L^2(\RR^{2d})$ the $2$-norm of the Hilbert--Schmidt integral operator $\widehat{K}$ can be calculated as 
\begin{equation} \label{eq:Knorm}
\big\|  \widehat{K}\big\|^2_2=\int_{\mathbb{R}^{2d}}  |K(\mathbf{x},\mathbf{y}) |^2 \, \mathrm{d} \mathbf{x} \, \mathrm{d} \mathbf{y}.
\end{equation} 

Now assume that $O$ is a self-adjoint, trace-class operator, then clearly we obtain that $\|  O \|_1=\sum_{i\geq 0} |\lambda_i|<\infty$, where $\{\lambda_i\}_{i\geq 0}$ are the eigenvalues with multiplicities. The main goal of this section is to find an upper bound for $\|O\|_1$ which can be effectively calculated. By \cite[Theorem~VI.~22]{SR} the operator $O$ can be written as a product 
\begin{equation} \label{eq:O12}
O=O_1 O_2,
\end{equation}
where $O_1,O_2\colon \iH \to \iH$ are Hilbert--Schmidt operators, so $\|O_i\|_2<\infty$ for $i=1,2$. H\"older's inequality \cite[Problem~28 in Chapter~VI]{SR} yields
\begin{equation} \label{eq:OHolder}
\|O\|_1\leq \| O_1\|_2 \| O_2\|_2, 
\end{equation} 
giving an upper bound for the Schatten--von Neumann $1$-norm of $O$. However, it is a hard problem to find a decomposition \eqref{eq:O12} where we can also calculate or estimate the norms $\| O_1\|_2$ and $\| O_2\|_2$. 

We propose a solution to self-adjoint trace-class integral operators $\widehat{K}$ with kernels consisting of a polynomial multiplied by a Gaussian kernel, see Subsection~\ref{ss:DO}. 
Then a decomposition to Hilbert--Schmidt operators
\begin{equation} \label{eq:K12}
\widehat{K}=\widehat{K}_1 \widehat{K}_2
\end{equation}
implies that the kernels satisfy
\begin{equation} \label{eq:prod_of_int_ops}
K(\mathbf{x},\mathbf{y}) = \int_{\mathbb{R}^d}  K_1(\mathbf{x},\mathbf{z}) K_2(\mathbf{z},\mathbf{y}) \, \mathrm{d} \mathbf{z}.
\end{equation}
Applying H\"older's inequality \eqref{eq:OHolder} to the decomposition \eqref{eq:K12}, and using the formula \eqref{eq:Knorm} for $K_1$ and $K_2$ yield
\begin{equation} \label{eq:Hold}
\big\|  \widehat{K} \big\|_1\leq \left(\int_{\mathbb{R}^{2d}}  |K_1(\mathbf{x},\mathbf{y}) |^2 \, \mathrm{d} \mathbf{x} \, \mathrm{d} \mathbf{y}\right)^{\frac{1}{2}} \left(\int_{\mathbb{R}^{2d}}  |K_2(\mathbf{x},\mathbf{y}) |^2 \, \mathrm{d} \mathbf{x} \, \mathrm{d} \mathbf{y}\right)^{\frac{1}{2}}. 
\end{equation}
This upper bound has a direct connection to the minimal eigenvalue problem. We obtain an immediate lower bound for $\lambda_{\min}$ and upper bound for the negativity $\mathcal{N} \defeq \sum_{i=0}^{\infty} |\min\{\lambda_i,0\}|$ as 
\begin{equation} \label{eq:Trmin}
|\lambda_{\min}|\leq \mathcal{N}=
\frac{\big\|  \widehat{K} \big\|_1- \Tr \big\{\widehat{K}\big\}}{2}\leq \frac{\big\|  \widehat{K}_1  \big\|_{2} \big\|  \widehat{K}_2  \big\|_{2}- \Tr \big\{\widehat{K}\big\}}{2}.
\end{equation}
It is interesting that while our operator $\widehat{K}$ is always self-adjoint, often there are only decompositions of the form \eqref{eq:K12} in which neither $\widehat{K}_1$ nor $\widehat{K}_2$ is self-adjoint. We will see more sophisticated estimates of $\lambda_{\min}$ in Section~\ref{s:lmin}, where we will also use an upper bound for $\big\|  \widehat{K} \big\|_1$ coming from \eqref{eq:K12} and \eqref{eq:Hold}. In the rest of this section we show a method how to calculate such upper bounds.

\subsection{Decomposition and optimisation} \label{ss:DO}

Given a polynomial Gaussian kernel $K\in L^2(\RR^{2d})$ of the form \eqref{eq:poly}, we want to decompose the operator $\widehat{K}$ to a product according to \eqref{eq:K12}. 
We will search the decomposition in the form 
\begin{equation*} 
K_1(\mathbf{x},\mathbf{y})=P_1(\mathbf{x},\mathbf{y})K_{G_1}(\mathbf{x},\mathbf{y}) \quad \text{and} \quad  K_2(\mathbf{x},\mathbf{y})=P_2(\mathbf{x},\mathbf{y})K_{G_2}(\mathbf{x},\mathbf{y}), 
\end{equation*}
where $K_{G_1},K_{G_2}$ are not necessarily self-adjoint Gaussian kernels in $2d$ variables, and $P_1,P_2$ are polynomials in $2d$ variables such that $\deg P_1+\deg P_2=\deg P$. In particular, if $\deg P_1=0$ or $\deg P_2=0$ then the decomposition leads us to a system of linear equations which can be solved effectively, and we can even try to minimize $\big\|  \widehat{K}_1  \big\|_{2} \big\|  \widehat{K}_2  \big\|_{2}$ coming from the different decompositions of $\widehat{K}$. We demonstrate this algorithm in the $d=1$ case together with a minimization in the following subsections.

\subsection{Gaussian operators in dimension one} \label{ss:Gauss}

Recall that a self-adjoint  Gaussian kernel $K_G\colon \RR^2\to \CC$ is of the form
\begin{align} \label{eq:g1d}
\begin{split}
K_G(x,y)=N_0\exp&\left\{-A(x-y)^2-iB(x^2-y^2)-C(x+y)^2 \right.\\
&~ \left. -iD(x-y)-E(x+y)\right\},
\end{split}
\end{align}
with real parameters $A>0$, $C>0$, $B$, $D$, $E$, where $N_0=2\sqrt{\frac{C}{\pi}}
\exp \left[-\frac{E^2}{4C}\right]$ is a normalizing factor ensuring that $\Tr\big\{\widehat{K}_G\big\}$=1.
The eigenvalue equation was solved in \cite{Bernád2018}; the eigenvalues are given by
\begin{equation*}
\lambda_i=\frac{2\sqrt{C}}{\sqrt{A}+\sqrt{C}} \left(\frac{\sqrt{A}-\sqrt{C}}{\sqrt{A}+\sqrt{C}}\right)^i, \quad i=0,1,\ldots.
\end{equation*} 
Correspondingly, the Schatten--von Neumann $1$-norm of the Gaussian $\widehat{K}_G$ is
\begin{equation} \label{eq:exact_l1norm:for_gauss}
\big\| \widehat{K}_G\big\|_1=\sum_{i=0}^\infty |\lambda_i|=\begin{cases}
	1 & \text{if } A\geq C, \\
	\sqrt{\frac CA} & \text{if } C>A.
\end{cases}
\end{equation}
By \eqref{eq:prod_of_int_ops} we consider the decomposition of $K_G$ as 
\begin{equation} \label{eq:gauss_decomp}
K_G (x,y)=\int_{-\infty}^\infty \, K_1(x,z) K_2(z,y)\, \mathrm{d}z. 
\end{equation}
We define $K_1$ and $K_2$ containing the free parameter $w\in \RR$ as follows. Let
\begin{equation*}
K_1(x,y)=N_1 \exp \left(-A_1 x^2 - B_1 y^2 - C_1 xy - D_1 x - E_1 y \right),
\end{equation*}
with
\begin{align} 
\begin{split}
\label{eq:first_exp_par1}
A_1&=w+ i B +\frac{AC}{w}\\ 
B_1&=w- i B +\frac{AC}{w},\\ 
C_1&=-2 \left( w-\frac{AC}{w}\right), \\ 
D_1&= i D+ \frac{AE}{w}, \\ 
E_1&=-i D+ \frac{AE}{w},  \\ 
N_1&=\frac{2\sqrt{C}}{\pi}\sqrt{\frac{(w^2-AC)^2}{w(A-w)(C-w)}}\exp{\left(-\frac{\left(AC-w^2\right)E^2}{4w(C-w)C}\right)}. 
\end{split}
\end{align}
Similarly let 
\begin{equation*} 
K_2(x,y)= \exp \left(-A_2 x^2 - B_2 y^2 - C_2 xy - D_2 x - E_2 y \right)
\end{equation*}
with
\begin{align}
\begin{split}
\label{eq:second_exp_par1}
A_2&= i B +\frac{(A-w)C}{C-w}+\frac{A(C-w)}{A-w},\\
B_2&= - i B  + \frac{(A-w)C}{C-w}+\frac{A(C-w)}{A-w},  \\
C_2&=2 \left( \frac{(A-w)C}{C-w}-\frac{A(C-w)}{A-w} \right), \\   
D_2&= i D + \frac{(A-w)E}{C-w},  \\  
E_2&=-i D + \frac{(A-w)E}{C-w}. 
\end{split}
\end{align}
It is easy to check that \eqref{eq:gauss_decomp} holds and the integral is finite if $w$ satisfies
\begin{equation} \label{eq:allowed_w}
0 < w < \min(A,C)  \qquad \text{or} \qquad 0< \max(A,C) < w.
\end{equation}
Straightforward calculation leads to
\begin{align}
\begin{split}
\label{eq:rho12}
\iint_{\RR^2} \left| K_1(x,y)\right|^2 \, \mathrm{d} x \, \mathrm{d} y=&\frac{C(w^2-A C)^2}{\pi \sqrt{AC} w(A-w)(C-w)} \exp{\left( -\frac{(A-w)E^2}{2(C-w)C}\right)},\\
\iint_{\RR^2} \left| K_2(x,y)\right|^2 \, \mathrm{d} x \, \mathrm{d} y=&\frac{\pi}{4\sqrt{AC}} \exp{\left( \frac{(A-w)E^2}{2(C-w)C}\right)},
\end{split}
\end{align}
provided that $w$ is taken from the allowed intervals \eqref{eq:allowed_w}. By \eqref{eq:rho12} we can calculate the product
\begin{equation*}
\iint_{\RR^2} \left| K_1(x,y)\right|^2 \, \mathrm{d} x \, \mathrm{d} y \iint_{\RR^2} \left| K_2(x,y)\right|^2 \, \mathrm{d} x \, \mathrm{d} y=\frac{(w^2-AC)^2}{4Aw(A-w)(C-w)}.
\end{equation*}
We can minimize this in $w$ on $[0,\min[A,C]]\cup [\max[A,C],\infty]$, which gives 
\begin{eqnarray} \label{eq:location_of_min_for_pos_gauss}
w_{\min}^{G}=\begin{cases}
A \mp \sqrt{A^2-AC}&  \qquad \text{if } A\geq C, \\
C \mp \sqrt{C^2-AC}& \qquad \text{if } C>A.
\end{cases}
\end{eqnarray}
The minimum value are the same at  both $(\pm)$ values.
Therefore, using H\"older's inequality \eqref{eq:Hold} 
we obtain the upper bound 
\begin{align*}
\big\| \widehat{K}_G\big\|_1&\leq 
\left. \sqrt{\iint_{\RR^2} \left| K_1(x,y)\right|^2 \, \mathrm{d} x \, \mathrm{d} y  \iint_{\RR^2} \left| K_2(x,y)\right|^2 \, \mathrm{d} x \, \mathrm{d} y	}\right|_{w=w_{\min}^{G}} \\
&=\begin{cases}
	1 & \text{if } A\geq C, \\
	\sqrt{\frac{C}{A}}& \text{if } C<A.
\end{cases}
\end{align*}
Comparing this with \eqref{eq:exact_l1norm:for_gauss} shows that our upper bound equals to the exact value of the norm $\big\| \widehat{K}_G\big\|_1$. 

\subsection{Quadratic polynomials multiplied by a Gaussian} \label{ss:quad}

Our second example is the polynomial Gaussian kernel $K\colon \RR^2\to \CC$ given by
\begin{equation} \label{eq:rho_Q}
K(x,y)=P(x,y) K_G(x,y),
\end{equation}
where $K_G(x,y)$ is the self-adjoint Gaussian kernel with trace $1$ from \eqref{eq:g1d}, and $P(x,y)$ is the self-adjoint polynomial given by 
\begin{align*} 
P(x,y)=\frac 1N &\left(A_P (x-y)^2+i B_P(x^2-y^2) + C_P (x+y)^2 \right. \\
&\, \left. +i D_P(x-y)+E_P(x+y) +F_P\right), 
\end{align*} 
with real parameters $A_P,B_P, C_P, D_P, E_P, F_P$ and normalizing factor
\begin{equation*} 
N=F_P + \frac{C_P-E_P E}{2C}+\frac{C_P E^2}{4 C^2}.
\end{equation*}
This ensures that $\Tr\big\{ \widehat{K}\big\}=1$. 
Let us decompose $\widehat{K}=\widehat{K}_1 \widehat{K}_2$, by \eqref{eq:prod_of_int_ops} the kernels must satisfy 
\begin{equation} \label{eq:decomposition_B}
K(x,y)=\int_{-\infty}^{\infty} K_1(x,z)K_2(z,y)\, \mathrm{d}z.
\end{equation}
We assume that $K_1$ is a Gaussian kernel and $K_2$ is a Gaussian multiplied by a quadratic polynomial of the form
\begin{align*}
K_1(x,y)&=N'_1 \exp\left( -A_1 x^2 - B_1 y^2 - C_1 xy - D_1 x - E_1 y\right); \\
K_2(x,y)&=P_2(x,y)\exp\left( -A_2 x^2 - B_2 y^2 - C_2 xy - D_2 x - E_2 y\right),
\end{align*}
where 
\begin{align*} 
P_2(x,y)=&A_{P_2} (x-y)^2+i B_{P_2}(x^2-y^2) + C_{P_2} (x+y)^2\\
&+i D_{P_2}(x-y)+E_{P_2}(x+y) +F_{P_2}, 
\end{align*}
and all the parameters are real. Choose the exponential parameters $A_1,\dots,E_1$, and $A_2,\dots,E_2$ according to \eqref{eq:first_exp_par1}-\eqref{eq:second_exp_par1}. By a straightforward calculation the normalizing factor $N'_1$ and the polynomial parameters $A_{P_2},\dots,F_{P_2}$ are uniquely determined by the decomposition \eqref{eq:decomposition_B}: one obtains linear equations for the polynomial parameters. Then the only free parameter is  $w$ (as before), which allows minimization. Allowed regions for $w$ are still given by \eqref{eq:allowed_w}, where $A$ and $C$ are real parameters of the Gaussian $K_G(x,y)$. The quantity
\begin{equation*}
R(w)=\left(\iint_{\RR^2} \left| K_1(x,y)\right|^2 \, \mathrm{d} x \, \mathrm{d} y  \right) \left( \iint_{\RR^2} \left| K_2(x,y)\right|^2 \, \mathrm{d} x \, \mathrm{d} y\right)
\end{equation*}
can be easily calculated. It is a lengthy rational function of the free parameter $w$, thus we do not quote it here. Finding the minimum point $w_{\min}$ of $R(w)$ in the allowed  regions \eqref{eq:allowed_w} give an optimized upper bound to the norm $\big\| \widehat{K}\big\|_1$. Note that the value of $w_{\min}$ is different from $w_{\min}^{G}$  given in \eqref{eq:location_of_min_for_pos_gauss}, our minimum is affected by the polynomial parameters as well. On Figure~\ref{fig:L1norm_as_a_function_gamma2} we display upper bounds of the Schatten--von Neumann $1$-norm for a family of quadratic Gaussian operators of the form \eqref{eq:rho_Q}. 
The decomposition \eqref{eq:decomposition_B} is not as optimal as in our first example. It can be proved that at $C_{P}=1$ the the operator $\widehat{K}$ is positive semidefinite, thus 
the $1$-norm is exactly $1$, but our decomposition \eqref{eq:decomposition_B} after the minimization leads to the upper bound 
\begin{equation} \label{eq:104}
\big\|\widehat{K} \big\|_1\leq 1.04054.
\end{equation}
Then \eqref{eq:Trmin} yields a lower bound for the minimal eigenvalue $\lambda_{\min}$ and an upper bound for the negativity $\mathcal{N}$ as follows:
\begin{equation} \label{eq:lminbound} 
|\lambda_{\min}|\leq \iN\leq \frac{1.04054-1}{2}=0.02027.
\end{equation} 

\begin{figure}
\includegraphics[height=0.8\hsize, angle=270]{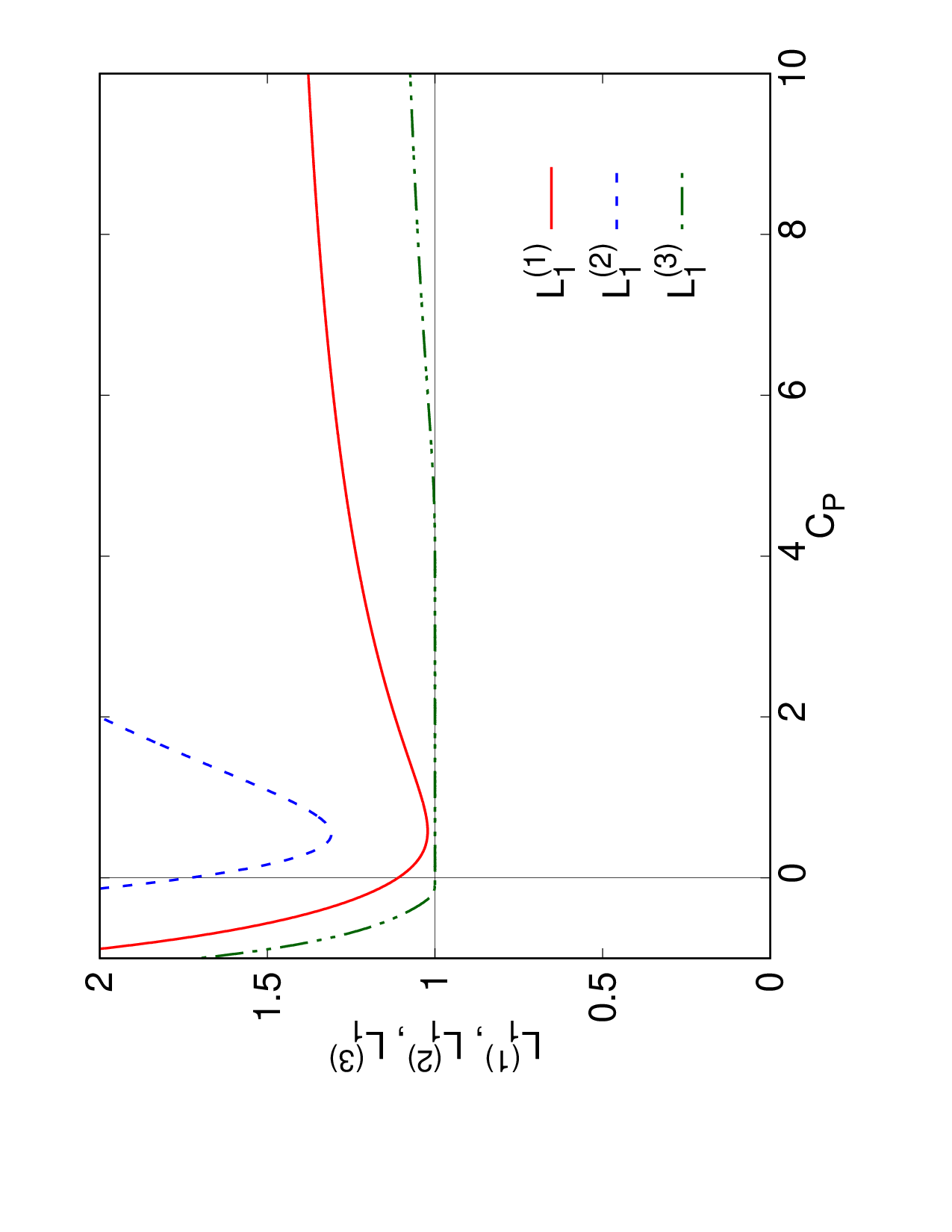}
\caption{Upper bounds for the Schatten--von Neumann $1$-norm of the operator $\widehat{K}$ as a function of the polynomial parameter $C_{P}$. The Gaussian parameters are given as $A=\frac 32$, $C=1$, $B=D=E=0$, and the polynomial parameters are $A_{P}=-1$, $F_{P}=1$, and $B_{P}=D_{P}=E_{P}=0$. Note that the operator $\widehat{K}$ is easily seen to be positive semidefinite at $C_{P}=1$. Here $L_1^{(1)}=\sqrt{R(w_\text{min})}$ is the solid line, $L_1^{(2)}=\sqrt{R(w_\text{min}^G)}$ is the dashed curve. The dash-dotted curve $L_1^{(3)}$ shows an estimate obtained from numerical diagonalization of the operator $\widehat{K}$, for the details see Subsection~\ref{ss:num}. \label{fig:L1norm_as_a_function_gamma2}}
\end{figure}

\section{A monotone approximation of the minimal eigenvalue} \label{s:lmin}

Let $O\colon \iH\to \iH$ be a self-adjoint, trace-class linear operator acting on a separable, complex Hilbert space $\iH$. Then $O$ has countably many real eigenvalues $\{\lambda_i\}_{i\geq 0}$ satisfying $\sum_{i\geq 0} |\lambda_i|<\infty$. If there are only finitely many eigenvalues $\lambda_0,\dots,\lambda_m$ then define $\lambda_i=0$ for each integer $i>m$. From now on let us fix $c>0$ satisfying 
 \begin{equation} \label{eq:c} 
 \sum_{i=0}^{\infty} |\lambda_i|\leq c.
 \end{equation}

\subsection{The estimate \texorpdfstring{$q_{n,c}$}{TEXT} and its speed of convergence} \label{ss:qnc}
\begin{definition}
For $n\geq 0$ define $h_n=h_{n,c}\colon [0,\infty)\to \RR$ as 
\begin{equation*} h_{n,c}(x)\defeq \sum_{k=0}^n e_kx^k-\sum_{k=n+1}^{\infty} \frac{(cx)^k}{k!}, 
%=\sum_{k=0}^n \left(e_k+\frac{c^k}{k!}\right)x^k-e^{cx},
\end{equation*}
and define $q_n=q_{n,c}$ as\footnote{If it does not cause confusion, we simply use $h_n$ and $q_n$ instead of $h_{n,c}$ and $q_{n,c}$, respectively.}
\begin{equation}  \label{eq:q_n_definition} 
q_{n,c}\defeq \frac{-1}{\min\{x>0: h_{n,c}(x)=0\}}.
\end{equation}
\end{definition}

The following theorem shows that $q_n$ is a monotone, asymptotically sharp lower bound of the minimal eigenvalue $\lambda_{\min}$. 

\begin{theorem} \label{t:qmon}
The sequence $q_n$ is monotone increasing and 
\begin{equation*} \lim_{n\to \infty} q_n= \lambda_{\min}.
\end{equation*} 
\end{theorem}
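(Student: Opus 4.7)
My plan is to pass from the sequence $q_n = q_{n,c}$ to the sequence of smallest positive roots $x_n \defeq \min\{x > 0 : h_{n,c}(x) = 0\}$, so that $q_n = -1/x_n$; monotonicity and convergence of $q_n$ then become, respectively, the monotonicity of $x_n$ and the statement $x_n \to x^*$, where $x^* \defeq \min\{x > 0 : g(x) = 0\}$ with $\min \emptyset = +\infty$, so that $\lambda_{\min} = -1/x^*$ by \eqref{eq:lmin}. The only ingredient needed throughout is the bound $|e_k| \leq c^k/k!$ obtained from \eqref{eq:ek} combined with \eqref{eq:c}.

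First I would check that $x_n$ exists: $h_n = h_{n,c}$ is continuous, $h_n(0) = e_0 = 1 > 0$, and since the tail $\sum_{k > n}(cx)^k/k!$ grows like $e^{cx}$ and eventually dominates the polynomial part $\sum_{k=0}^n e_k x^k$, one has $h_n(x) \to -\infty$, so by the intermediate value theorem $x_n$ is well defined. Next I would derive the two pointwise inequalities on $[0, \infty)$:
\begin{equation*}
h_{n+1}(x) - h_n(x) = e_{n+1}\, x^{n+1} + \frac{(cx)^{n+1}}{(n+1)!} \geq 0,
\end{equation*}
\begin{equation*}
g(x) - h_n(x) = \sum_{k = n+1}^{\infty} \left( e_k + \frac{c^k}{k!}\right) x^k \geq 0,
\end{equation*}
both immediate from $e_k \geq -c^k/k!$. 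The first, combined with $h_n > 0$ on $[0, x_n)$ (by minimality of $x_n$), forces $h_{n+1} > 0$ on $[0, x_n)$, whence $x_{n+1} \geq x_n$ and $q_n \leq q_{n+1}$. The second, together with $h_n(0) = 1$, gives via the intermediate value theorem that $x_n \leq x^*$ whenever $x^* < \infty$; this translates to $q_n \leq \lambda_{\min}$, and the inequality is trivial when $x^* = +\infty$.

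For the convergence, since $x_n$ is monotone increasing and bounded above by $x^*$ in $[0, +\infty]$, it has a limit $x_\infty \leq x^*$. I would argue by contradiction: if $x_\infty < x^*$, then $x_\infty < \infty$ and $g(x_\infty) > 0$. Choosing any $M > x_\infty$, the estimate
\begin{equation*}
|g(x) - h_n(x)| \leq 2 \sum_{k > n} \frac{(cM)^k}{k!} \longrightarrow 0
\end{equation*}
shows $h_n \to g$ uniformly on $[0, M]$; evaluating at $x_n$ and using $h_n(x_n) = 0$ together with the continuity of $g$ forces $g(x_\infty) = 0$, a contradiction. Hence $x_\infty = x^*$, and therefore $q_n \to \lambda_{\min}$. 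I expect this last step to be the main obstacle, because it must simultaneously cover the case $\lambda_{\min} < 0$ (finite $x^*$) and the positive-semidefinite case $\lambda_{\min} = 0$, where $x^* = +\infty$ and one has to separately rule out the possibility that $x_n$ remains bounded.
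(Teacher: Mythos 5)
Your proposal is correct and follows essentially the same route as the paper: both rest on the sandwich $g(x)\geq h_{n+1}(x)\geq h_n(x)$ on $[0,\infty)$, derived from $e_k\geq -c^k/k!$, together with uniform convergence of $h_n$ to $g$ on bounded intervals. You merely unfold the last step (which the paper states in one line) into an explicit contradiction argument about the limit of the roots $x_n$, and your closing worry is unfounded—your argument already covers $x^*=+\infty$, since $x_\infty<\infty$ there would force $g(x_\infty)=0$ and hence $x^*<\infty$.
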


\begin{proof} First we show that for all $x\geq 0$ and $n\geq 0$ we have 
\begin{equation} \label{eq:gn+1}
g(x)\geq h_{n+1}(x)\geq h_n(x). 
\end{equation}
Let us fix $x\geq 0$ and $n\geq 0$. 
The definition of $g,h_n$ and \eqref{eq:ek} imply that 
\begin{equation*}
g(x)=\sum_{k=0}^{\infty} e_k x^k\geq \sum_{k=0}^n e^k x^k-\sum_{k=n+1}^{\infty} \frac{(cx)^k}{k!}=h_n(x).
\end{equation*}
Inequality \eqref{eq:ek} also implies that 
\begin{equation*}
h_{n+1}(x)-h_n(x)=\left(e_{n+1}+\frac{c^{n+1}}{(n+1)!}\right)x^{n+1}\geq 0, 
\end{equation*}
thus \eqref{eq:gn+1} holds.
Therefore the definition of $q_n$, inequalities \eqref{eq:lmin} 
and \eqref{eq:gn+1}, and $g(0)=h_n(0)=1$ imply that $q_n$ is monotone increasing and 
$q_n\leq \lambda_{\min}$ for all $n\geq 0$. Finally, $q_n\to \lambda_{\min}$ follows from the fact that $h_n$ uniformly converges to $g$ on any bounded interval $I\subset [0,\infty)$. 
\end{proof}

The following theorem states that if $\lambda_{\min}=0$ then $q_{n}\approx -\frac cn$, see Figure~\ref{fig:poligauss_poz} for a numerical example. 
\begin{theorem} \label{t:qc} Let $\lambda_{\min}=0$. Then for any integer $n\geq 0$ we have 
\begin{equation*} 
\frac{c}{n+1} \leq |q_n|\leq \frac{ec}{n+1}.
\end{equation*} 
Moreover, the lower bound of $|q_n|$ holds without the assumption $\lambda_{\min}=0$, too. 
\end{theorem}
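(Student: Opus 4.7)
The plan is to work with $x_n := \min\{x > 0 : h_{n,c}(x) = 0\} = 1/|q_n|$ and to prove the equivalent pair of bounds $(n+1)/(ec) \leq x_n \leq (n+1)/c$, of which only the upper bound requires $\lambda_{\min}=0$.

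For the \emph{upper bound on $x_n$} (i.e.\ $|q_n| \geq c/(n+1)$, which is the unconditional half), I use the universal estimate $|e_k| \leq c^k/k!$ from \eqref{eq:ek} to write, for every $x \geq 0$,
\[
h_{n,c}(x) \leq \sum_{k=0}^n \frac{(cx)^k}{k!} - \sum_{k=n+1}^{\infty} \frac{(cx)^k}{k!}.
\]
I then evaluate at $x=(n+1)/c$ and set $a_k := (n+1)^k/k!$. The recursion $a_{k+1}/a_k = (n+1)/(k+1)$ gives $a_n = a_{n+1}$, and $(n+1-i)(n+1+i) < (n+1)^2$ yields $a_{n-j} < a_{n+1+j}$ for each $j \geq 1$. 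Pairing $k \leftrightarrow 2n+1-k$ across the cut then produces $\sum_{k=0}^n a_k \leq \sum_{k=n+1}^{2n+1} a_k \leq \sum_{k=n+1}^{\infty} a_k$, hence $h_{n,c}((n+1)/c) \leq 0$. Combined with $h_{n,c}(0)=1$, continuity forces a root in $(0,(n+1)/c]$, so $x_n \leq (n+1)/c$.

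For the \emph{lower bound on $x_n$} (i.e.\ $|q_n| \leq ec/(n+1)$), the hypothesis $\lambda_{\min}=0$ makes $O$ positive semidefinite, so $e_k \geq 0$ for every $k$, giving $h_{n,c}(x) \geq 1 - \sum_{k=n+1}^{\infty} (cx)^k/k!$ on $[0,\infty)$. At $x=(n+1)/(ec)$ the argument $cx = (n+1)/e$ satisfies $(n+1)/e \leq (n+2)/2$ for every $n \geq 0$ (since $e>2$), so Fact~\ref{f:tail} bounds the tail by $2((n+1)/e)^{n+1}/(n+1)!$. The standard Stirling lower bound $(n+1)! \geq \sqrt{2\pi(n+1)}\,((n+1)/e)^{n+1}$ reduces this to $\sqrt{2/(\pi(n+1))} \leq \sqrt{2/\pi} < 1$. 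Hence $h_{n,c}$ is strictly positive on the whole of $[0,(n+1)/(ec)]$, forcing $x_n > (n+1)/(ec)$.

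The main obstacle is the symmetric pairing argument showing that at $y=n+1$ the truncated exponential $\sum_{k=0}^n y^k/k!$ is dominated by its tail; this is essentially the statement that the median of a Poisson$(n+1)$ random variable exceeds $n$, and the elementary inequality $(n+1-i)(n+1+i)<(n+1)^2$ is exactly what is needed to make the pairing work. Everything else (the $|e_k|$ bound, the application of Fact~\ref{f:tail}, Stirling, and the comparison $(n+1)/e \leq (n+2)/2$) is routine.
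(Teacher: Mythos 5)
Your proof is correct and follows essentially the same path as the paper's: the same $|e_k|\leq c^k/k!$ comparison with the symmetrically truncated exponential, the same pairing argument $(n+1-i)(n+1+i)\leq(n+1)^2$ for the unconditional bound $|q_n|\geq c/(n+1)$, and Fact~\ref{f:tail} plus a Stirling-type lower bound on $(n+1)!$ for $|q_n|\leq ec/(n+1)$. The only (cosmetic) difference is that you evaluate directly at the candidate points $x=(n+1)/c$ and $x=(n+1)/(ec)$, whereas the paper first introduces the implicitly defined roots $d_n$ and $b_n$ of the auxiliary equations and then bounds those.
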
 

\begin{proof} 
First we prove the upper bound. Let $\psi_n(x)=\sum_{k=n+1}^{\infty} \frac{x^{k}}{k!}$ and let $b_n$ be the only positive solution to the equation $\psi_n(x)=1$. As $e_k\geq 0$ for all $k$, we have $\sum_{k=0}^n e_kx^k\geq 1$ for all $x\geq 0$. Therefore, the definition of $h_n$ implies that  
\begin{equation*} \min\{x>0: h_n(x)=0\}\geq \min\{x>0: \psi_n(cx)=1\}=\frac{b_n}{c},
\end{equation*}  
which yields
\begin{equation} \label{eq:qb}
|q_n|\leq \frac{c}{b_n}.
\end{equation}
Before proving a lower bound for $b_n$ we show that 
\begin{equation} \label{eq:Kn}
b_n\leq \frac{n+2}{2}.
\end{equation}
Let $x_n=\frac{n+2}{2}$, we need to prove that $\psi_n(x_n)\geq 1$. 
By a straightforward estimate 
\begin{equation*}  \psi_{n}(x_n)\geq \frac{(x_n)^{n+1}}{(n+1)!}=\frac{(n+2)^{n+1}}{2^{n+1} (n+1)!}, 
\end{equation*} 
so it is enough to show that the sequence $a_n\defeq \frac{(n+2)^{n+1}}{2^{n+1} (n+1)!}$ satisfies $a_n\geq 1$ for all $n$. Indeed, $a_0=1$ and using the monotonicity of the sequence $\left(1+\frac 1n\right)^n$ for all $n\geq 1$ we obtain 
\begin{equation*} \frac{a_n}{a_{n-1}}=\frac 12 \left(1+\frac{1}{n+1}\right)^{n+1}>1, 
\end{equation*} 
so $a_n\geq 1$. Therefore \eqref{eq:Kn} holds. 

By Fact~\ref{f:tail} we have
\begin{equation*} 
\psi_n(x)\leq \frac{2x^{n+1}}{(n+1)!}  \quad \text{for all } 0\leq x\leq \frac{n+2}{2}.
\end{equation*}   
Using this together with \eqref{eq:Kn}, and the well-known inequality $n!\geq e\left(\frac{n}{e}\right)^{n}$ we obtain 
\begin{equation*} 
1=\psi_n(b_n)\leq  2 \frac{(b_n)^{n+1}}{(n+1)!}<  \left(\frac{eb_n}{n+1} \right)^{n+1}. \end{equation*}
Thus $b_n\geq \frac{n+1}{e}$. Hence \eqref{eq:qb} yields
\begin{equation*}  
|q_n|\leq \frac{c}{b_n}\leq \frac{ec}{n+1},
\end{equation*}
which finishes the proof of the upper bound. 

Finally, we will prove the lower bound. For integers $n\geq 0$ and reals $x\geq 0$ consider the function 
\begin{equation*} 
f_n(x)=\sum_{k=0}^n \frac{x^k}{k!}-\sum_{k=n+1}^{\infty} \frac{x^k}{k!}, 
\end{equation*} 
and let 
\begin{equation*} 
d_n=\min\{x>0: f_n(x)=0\}.
\end{equation*} 
By \eqref{eq:ek} we obtain that $h_n(x)\leq f_n(cx)$. This inequality and $h_n(0)=f_n(0)=1$ imply that 
\begin{equation*}
\min\{x>0: h_n(x)=0\}\leq \min\{x>0: f_n(cx)=0\} \leq \frac{d_n}{c}, 
\end{equation*}
which yields that 
\begin{equation} \label{eq:qd}
|q_n|\geq \frac{c}{d_n}. 
\end{equation}
Thus it is enough to show that 
\begin{equation} \label{eq:dn} 
d_n\leq n+1,
\end{equation} 
for which it is sufficient to see that $f_n(n+1)\leq 0$. In order to see this, it is enough to prove that for all integers $0\leq m\leq n$ we have 
\begin{equation} \label{eq:frac} \frac{(n+1)^{n-m}}{(n-m)!}\leq \frac{(n+1)^{n+1+m}}{(n+1+m)!},
\end{equation} 
since summarizing \eqref{eq:frac} from $m=0$ to $n$ clearly yields 
\begin{equation*} 
\sum_{k=0}^n \frac{(n+1)^k}{k!}\leq \sum_{k=n+1}^{2n+1} \frac{(n+1)^k}{k!}<\sum_{k=n+1}^{\infty} \frac{(n+1)^k}{k!}.
\end{equation*} 
Inequality~\eqref{eq:frac} is equivalent to 
\begin{equation*}
\prod_{i=-m}^{m} (n+1+i)\leq (n+1)^{2m+1},
\end{equation*}
which clearly follows from the identity 
\begin{equation*} 
(n+1-i)(n+1+i)=(n+1)^2-i^2\leq (n+1)^2.
\end{equation*}
Hence we proved \eqref{eq:dn}. Inequalities \eqref{eq:qd} and \eqref{eq:dn} imply that 
\begin{equation*}
|q_n|\geq \frac{c}{d_n}\geq \frac{c}{n+1},  
\end{equation*}
so the proof of the lower bound is also complete. 
\end{proof}

\begin{figure}
\includegraphics[height=0.8\hsize, angle=270]{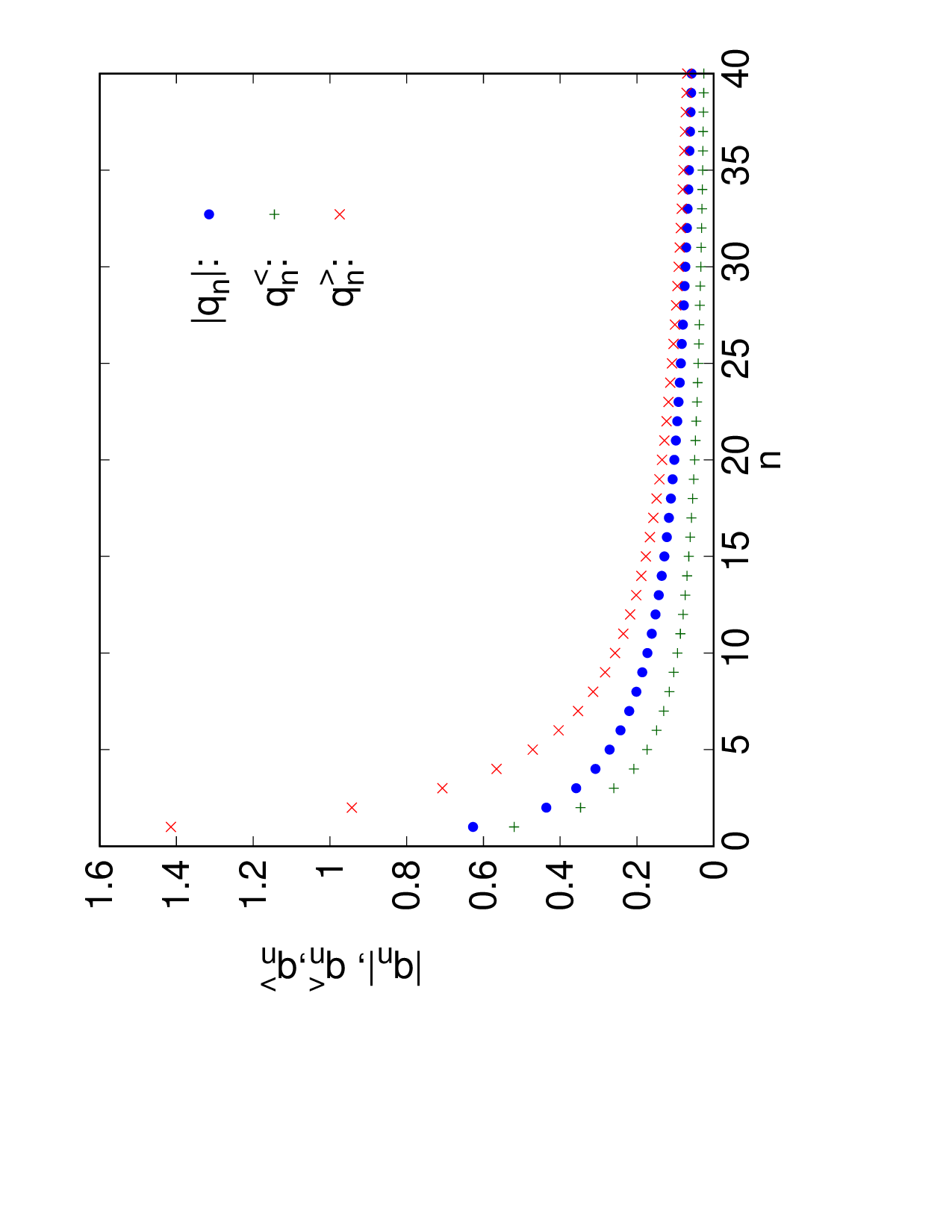}
\caption{Consider the positive semidefinite polynomial Gaussian operator $\widehat{K}$ of the form \eqref{eq:rho_Q} with parameters $A=\frac 32$, $C=1$, $B=D=E=0$, $A_P=-1$, $C_P=1$, $F_P=1$, and $B_P=D_P=E_P=0$. By \eqref{eq:104} the upper bound of $\big\|\widehat{K} \big\|_1$ can be chosen to be $c=1.04054$. We plotted $|q_n|=|q_{n,c}|$, $q_n^<=\frac{c}{n+1}$, and $q_n^>=\frac{ec}{n+1}$ as a function of $n$ calculated from \eqref{eq:q_n_definition}.
\label{fig:poligauss_poz}}
\end{figure}

In the case of $\lambda_{\min}<0$ the following theorem establishes that $q_n\to \lambda_{\min}$ with super-exponential speed. On the good side,  $q_{n,c}$ is monotone, so our method provides a rigorous lower bound for $\lambda_{\min}$. On the bad side, $q_{n,0}$ approximates $\lambda_{\min}$ much better than $q_{n,c}$, and the main reason of this is that $q_{n,0}$ does not depend on the value of $c$ through the sub-optimal estimate \eqref{eq:ek}.  Note that inequality \eqref{eq:pn} is better than \eqref{eq:delta}, since the coefficient of $-n\log(n)$ in the exponent is $1$ instead of $\frac{|\lambda_{\min}|}{c}\leq 1$, avoiding the critical case when $\frac{|\lambda_{\min}|}{c}$ is very small.

\begin{figure}
\includegraphics[height=0.8\hsize, angle=270]{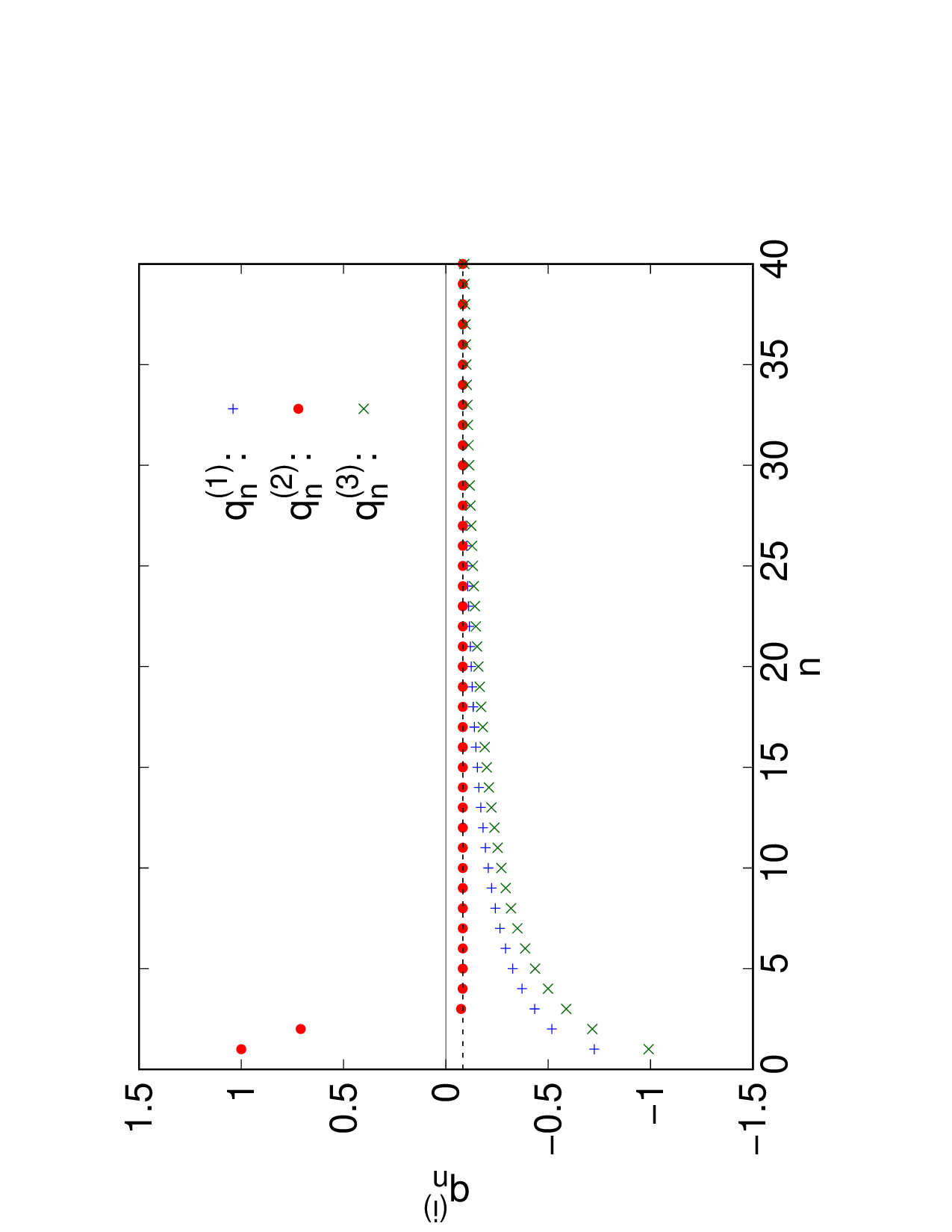}
\caption{Consider the polynomial Gaussian operator $\widehat{K}$ of the form \eqref{eq:rho_Q} with parameters $A=\frac 32$, $C=1$, $B=D=E=0$, $A_P=-1$, $C_P=40$, $F_P=1$, and $B_P=D_P=E_P=0$. We plotted $q_n^{(i)}=q_{n,c}$ as a function of $n$ calculated from \eqref{eq:q_n_definition} and \eqref{eq:qn0}.  
For $i=1$ we obtain $c=\sum_j |\lambda_j|=1.16445$ from diagonalization, for the details see Subsection~\ref{ss:num}. 
For $i=2$ we take $c=0$, and for $i=3$ the value $c=1.4941$ was calculated from the H\"older's upper bound and optimization given in Subsection~\ref{ss:quad}. The horizontal dashed line indicates the common limit  $-0.082228$.}
\label{fig:nongaussian_qns}
\end{figure}

 \begin{theorem} \label{t:qexp}
Let $\lambda_{\min}<0$ and $\alpha=\frac{c}{|\lambda_{\min}|}$. Then for all $n$ we have 
\begin{equation} \label{eq:qlambda} |q_{n,c}-\lambda_{\min}|\leq \frac{\delta_n}{1-\delta_n} |\lambda_{\min}|,
\end{equation} 
where $0<\delta_n<1$ is the unique solution of the equation  
\begin{equation*} x^{\alpha}=2\sum_{k=n+1}^{\infty} \frac{(\alpha(1-x))^{k}}{k!}.
\end{equation*} 
For $n\geq \max\{3,2(\alpha-1)\}$ we have the estimate 
\begin{equation} \label{eq:delta}  
\delta_n\leq \left(\frac{e\alpha}{n+1}\right)^{\frac{n+1}{\alpha}}=\exp \left[-\frac{|\lambda_{\min}|}{c} n \log(n)+\mathcal{O}(n)\right]. 
\end{equation} 
\end{theorem}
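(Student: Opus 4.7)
The plan is to set $\theta_0 = 1/|\lambda_{\min}|$, the smallest positive root of $g$, and to let $r_n = \min\{x > 0 : h_n(x) = 0\}$, so $|q_{n,c}| = 1/r_n$ and Theorem~\ref{t:qmon} already yields $r_n \leq \theta_0$. The quantitative task is to prove the lower bound $r_n \geq \theta_0(1 - \delta_n)$, since then
\begin{equation*}
|q_{n,c} - \lambda_{\min}| = \frac{1}{r_n} - \frac{1}{\theta_0} \leq \frac{\theta_0 \delta_n}{\theta_0^2(1-\delta_n)} = \frac{\delta_n}{1-\delta_n}|\lambda_{\min}|,
\end{equation*}
which is \eqref{eq:qlambda}. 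To obtain $r_n \geq \theta_0(1-\delta_n)$ it suffices to check $h_n(x) > 0$ on $(0, \theta_0(1-\delta_n))$.

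The estimate $|e_k| \leq c^k/k!$ from \eqref{eq:ek}, applied to the identity $\sum_{k=0}^n e_k x^k = g(x) - \sum_{k > n} e_k x^k$, gives $h_n(x) \geq g(x) - 2\sum_{k=n+1}^{\infty} (cx)^k/k!$. The heart of the argument is then the lower bound $g(x) \geq (1 - x/\theta_0)^\alpha$ for $x \in [0, \theta_0]$. Writing $\mu_i = \lambda_i/|\lambda_{\min}|$ and $y = x/\theta_0$, the hypothesis \eqref{eq:c} gives $\sum_i |\mu_i| \leq \alpha$, while $\mu_i \geq -1$ by definition of $\lambda_{\min}$. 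The concavity of $\mu \mapsto \log(1 + \mu y)$ on $[-1, 0]$, together with its vanishing at $\mu = 0$, yields the chord inequality $\log(1 + \mu y) \geq |\mu| \log(1-y)$ for every $\mu \geq -1$ and $y \in [0, 1)$; the case $\mu \geq 0$ is trivial because the right-hand side is non-positive. Summing over $i$ and using $\log(1-y) \leq 0$ together with $\sum |\mu_i| \leq \alpha$ yields $\log g(x) \geq \alpha \log(1-y)$, i.e., $g(x) \geq (1-x/\theta_0)^\alpha$. Substituting $x = \theta_0(1-\delta)$ (so that $cx = \alpha(1-\delta)$), the desired inequality becomes $\phi_n(\delta) := \delta^\alpha - 2\sum_{k=n+1}^{\infty}(\alpha(1-\delta))^k/k! \geq 0$. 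A direct computation of $\phi_n'$ shows that $\phi_n$ is strictly increasing on $[0, 1]$ with $\phi_n(0) < 0 < 1 = \phi_n(1)$, so there exists a unique $\delta_n \in (0, 1)$ with $\phi_n(\delta_n) = 0$, and $\phi_n(\delta) > 0$ precisely for $\delta > \delta_n$; this delivers $h_n(x) > 0$ on $(0, \theta_0(1-\delta_n))$ and hence $r_n \geq \theta_0(1-\delta_n)$.

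For the explicit estimate \eqref{eq:delta}, set $t_n = (e\alpha/(n+1))^{(n+1)/\alpha}$; by monotonicity of $\phi_n$ it suffices to verify $\phi_n(t_n) \geq 0$. The hypothesis $n \geq 2(\alpha-1)$ forces $\alpha(1-t_n) \leq \alpha \leq (n+2)/2$, so Fact~\ref{f:tail} gives
\begin{equation*}
2\sum_{k=n+1}^{\infty} \frac{(\alpha(1-t_n))^k}{k!} \leq \frac{4\alpha^{n+1}}{(n+1)!},
\end{equation*}
and since $t_n^\alpha = (e\alpha/(n+1))^{n+1}$, the inequality $\phi_n(t_n) \geq 0$ reduces to $(n+1)! \geq 4((n+1)/e)^{n+1}$, which holds for $n \geq 3$ by Stirling's bound. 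The asymptotic form in \eqref{eq:delta} then follows from $\log t_n = \frac{n+1}{\alpha}[1 + \log\alpha - \log(n+1)] = -\frac{n\log n}{\alpha} + \mathcal{O}(n)$. The main obstacle is the uniform bound $\log(1+\mu_i y) \geq |\mu_i|\log(1-y)$: finding a single inequality valid for every eigenvalue simultaneously, irrespective of sign or magnitude, is what makes the rest fall into place; everything afterwards is monotonicity of $\phi_n$ and Stirling bookkeeping.
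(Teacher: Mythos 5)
Your proof is correct and follows essentially the same plan as the paper's: reduce the task to showing $h_n(x) > 0$ for $x < \theta_0(1-\delta_n)$, derive $h_n(x) \geq g(x) - 2\sum_{k>n}(cx)^k/k!$ from \eqref{eq:ek}, prove the key lower bound $g(x) \geq (1-x/\theta_0)^{\alpha}$, and finish by analyzing the increasing function $\phi_n$. Your derivation of the key inequality via the chord inequality for the concave map $\mu \mapsto \log(1+\mu y)$ on $[-1,0]$ is a clean reformulation, but it is the same fact the paper obtains from the monotonicity of $(1-x)^{1/x}$ on $(0,1]$; substituting $\mu = -|\lambda_i|/|\lambda_{\min}|$ and $y = |\lambda_{\min}|x$ converts one into the other.

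There is one small gap in your derivation of \eqref{eq:delta}. To deduce $\delta_n \leq t_n$ from $\phi_n(t_n) \geq 0$ and the monotonicity of $\phi_n$, and to apply Fact~\ref{f:tail} with $x = \alpha(1-t_n)$, you need $0 \leq \alpha(1-t_n)$, i.e.\ $t_n \leq 1$. But the hypothesis $n \geq \max\{3, 2(\alpha-1)\}$ does not force $t_n \leq 1$: for $\alpha = 2$ and $n = 3$ one gets $t_3 = (2e/4)^{2} = (e/2)^2 \approx 1.85 > 1$, and Fact~\ref{f:tail} does not apply to a negative argument. This is easily patched by noting that if $t_n \geq 1$ then the claimed bound is trivial because $\delta_n < 1 \leq t_n$, so one may assume $t_n < 1$ in the remaining computation. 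The paper sidesteps the issue by manipulating the defining equation $\delta_n^{\alpha} = 2\sum_{k>n}(\alpha(1-\delta_n))^k/k!$ directly, where $\alpha(1-\delta_n) \in (0,\alpha]$ is automatic, and then substituting $N = (n+1)/\alpha$ to isolate $\delta_n$; you may prefer that route if you want to avoid the case distinction.
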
 

\begin{proof}
First we show \eqref{eq:qlambda}. Let 
\begin{equation*} 
\theta=\min\{x>0: g(x)=0\}=\frac{1}{|\lambda_{\min}|} \quad \text{and} \quad   
\theta_n=\min\{x>0: h_n(x)=0\}. 
\end{equation*} 
Since $\theta_n\leq \theta$, we can define $0\leq s_n<1$ such that $\theta_n=(1-s_n) \theta$. Then we have 
\begin{equation*}
|q_n-\lambda_{\min}|=\left|-\frac{1}{\theta_n}+\frac{1}{\theta}\right|=\frac{|\theta_n-\theta|}{\theta_n \theta}=\frac{s_n}{1-s_n} |\lambda_{\min}|, 
\end{equation*}
so as the function $x\mapsto \frac{x}{1-x}$ is monotone increasing on $[0,1)$, it is enough to show that $s_n\leq \delta_n$. 
For $0\leq x\leq 1$ let  
\begin{equation*}  
\phi_n(x)=x^{\alpha}-2\sum_{k=n+1}^{\infty} \frac{(\alpha(1-x))^{k}}{k!}. 
\end{equation*} 
Since $\phi_n$ is strictly monotone increasing and $\phi_n(\delta_n)=0$, it is enough to prove that 
\begin{equation} \label{eq:phi} 
\phi_n(s_n)\leq 0.
\end{equation} 
First we prove that 
\begin{equation} \label{eq:theta} 
g(\theta_n)\geq s_n^{\alpha}.
\end{equation} 
Let $\lambda=|\lambda_{\min}|$. It is easy to see that the function 
$x\mapsto (1-x)^{\frac 1x}$ is monotone decreasing on $(0,1]$, so for all $0\leq x\leq \theta =\frac{1}{\lambda}$ we have 
\begin{equation*} 
g(x)=\prod_{i=0}^{\infty} (1+\lambda_ix)\geq \prod_{i: \lambda_i<0} (1+\lambda_i x)\geq  \prod_{i: \lambda_i<0}  (1-\lambda x)^{\frac{|\lambda_i|}{\lambda}}\geq (1-\lambda x)^{\alpha}, 
\end{equation*} 
which implies \eqref{eq:theta}. On the other hand, by the definitions of $g,h_n$ and \eqref{eq:ek} for all $x\geq 0$ we obtain 
\begin{equation*} 
g(x)-h_n(x)=\sum_{k=n+1} ^{\infty} \left(e_k+\frac{c^k}{k!}\right)x^k\leq 2 \sum_{k=n+1} ^{\infty} \frac{(cx)^k}{k!},
\end{equation*} 
therefore
\begin{equation} \label{eq:sn} g(\theta_n)\leq 2\sum_{k=n+1}^{\infty} \frac{(\alpha(1-s_n))^{k}}{k!}.
\end{equation}
Then \eqref{eq:theta} and \eqref{eq:sn} imply \eqref{eq:phi}, so the proof of \eqref{eq:qlambda} is complete. 

Finally, we prove \eqref{eq:delta}. By a well-known estimate for factorials we have 
\begin{equation} \label{eq:fact}  
(n+1)!\geq \sqrt{2n\pi} \left( \frac{n+1}{e}\right)^{n+1}> 4 \left( \frac{n+1}{e}\right)^{n+1}
\end{equation} 
and $n\geq \max\{3,2(\alpha-1)\}$ implies
\begin{equation} \label{eq:n+2} 
 \alpha(1-\delta_n)<\alpha\leq \frac{n+2}{2}.
\end{equation} 
Then \eqref{eq:n+2} with Fact~\ref{f:tail} and \eqref{eq:fact} yield
\begin{equation*} \delta_n^{\alpha} =2\sum_{k=n+1}^{\infty} \frac{(\alpha(1-\delta_n))^{k}}{k!}\leq 4\frac{(\alpha (1-\delta_n))^{n+1}}{(n+1)!} \leq \left(\frac{e\alpha (1-\delta_n)}{n+1}\right)^{n+1}.
\end{equation*} 
Substituting $N=\frac{n+1}{\alpha}$ implies 
\begin{equation*} N\delta_n^{\frac 1N}\leq e(1-\delta_n)\leq e.
\end{equation*}  
Thus
\begin{equation*} \delta_n\leq \left(\frac eN\right)^N=\left(\frac{e\alpha}{n+1}\right)^{\frac{n+1}{\alpha}}=\exp \left[-\frac{|\lambda_{\min}|}{c} n \log(n)+\mathcal{O}(n)\right], 
\end{equation*} 
hence \eqref{eq:delta} holds. This completes the proof of the theorem. 
\end{proof}

\subsection{Numerical calculation of the spectrum in an important special case} \label{ss:num}
Matrix representation $K_{m,n}$ of an arbitrary quadratic polynomial Gaussian kernel in case of $d=1$ can be calculated rigorously, see \cite[Equations~(22),(23)]{Newton}. Since the matrix elements $K_{m,n}$ decrease exponentially as $m,n\to \infty$, we can truncate the matrix at a finite size, and the original spectrum can be numerically approximated with the eigenvalues of the resulting finite dimensional operator. In the more general case, numerical diagonalization is not an easy task even for a polynomial Gaussian kernel. Our results derived from the numerical diagonalization are shown in Figure~\ref{fig:L1norm_as_a_function_gamma2}, and we also used it for Figure~\ref{fig:nongaussian_qns}. The efficiency and accuracy of this method is confirmed by comparing it with the results of \cite{Newton,balka2024positivity}.

\section{Discussion and conclusions}

The eigenvalues of self-adjoint trace-class operators play a very important role in physics. These operators naturally appear in many areas of quantum physics, and their eigenvalues are given by a Fredholm type integral equation, whose analytical solution is unfortunately completely hopeless in general. As our main result, we approximated the spectrum of the aforementioned operators both in theory and in practice. In particular, our methods provide  estimates of the Schatten--von Neumann 1-norm and the von Neumann entropy. 

We also investigated a mathematically precise non-monotone approximation of the minimal and maximal eigenvalues, which also often play an important role in physics. Here, we paid special attention to estimating the speed of convergence, which is important for many practical applications. We proved that if the there is a negative eigenvalue, then the speed of convergence to the minimal eigenvalue given by our algorithm is super-exponential, see Section~\ref{s:qn0}. Analogously, the speed of convergence to the maximal eigenvalue is always super-exponential.

In Section~\ref{s:Sch} we provided rigorous upper bounds for the Schatten–von Neumann 1-norm. We demonstrated that it works effectively for Gaussian and polynomial Gaussian operators as well. It is interesting that non self-adjoint operators also come into the picture: we decompose our operator to the product two not necessarily self-adjoint operators, and bound the $1$-norm of our operator by calculating the two $2$-norms and using H\"older's inequality. This method might be advantageous if calculating the moments of our operator is challenging; and the aforementioned decompositions might be useful in practice as well.  

%It may happen in some cases that the estimation of the Schatten--von Neumann based on the method described in Section \ref{s:spectrum} is complicated in practice. In such cases, this method based on mathematically rigorous estimation can bring results faster. Although this gives a completely accurate estimate for the Schatten–von Neumann 1-norm only in certain cases. The knowing of the relatively inaccurate upper bound is  very useful for example  during the monotonic approximation of the minimal and maximal eigenvalues.

Often, the only goal is to decide whether a self-adjoint trace-class operator is positive semidefinite, which depends on whether the minimal eigenvalue is negative. We constructed a monotone sequence of lower bounds for the minimal eigenvalue that only depends on the moments of the operator and a concrete upper estimate of its Schatten--von Neumann $1$-norm, and also converges to the minimal eigenvalue. We demonstrated that this approximation can be effectively calculated for a large class of physically relevant operators. We proved that this approximation converges with super-exponential speed to the minimal eigenvalue if the operator is not positive semidefinite.

Finally, we note that the algorithm outlined in this paper is an efficient alternative of the  numerical diagonalization  for a polynomial
Gaussian kernel, see Subsection~\ref{ss:num}. In principle, it can be applied to a much broader class of self-adjoint operators, not only to polynomial Gaussian operators.  A major advantage of our method is that it avoids the need for an appropriate basis for diagonalization. However, it requires an extreme precise determination of the moments $M_k$. Usually, the calculation of low moments is easy, but determination of higher moments can be more demanding. 
The next step, the calculation of $e_k$ from the moments can be also difficult numerically. The numbers $e_k$ converge much faster to zero than the moments $M_k$ for increasing $k$. With extended precision to a few tens of digits, offered by Mathematica, our method presented here is numerically stable, accurate, fast, and it can be implemented efficiently.

%Finally, we note that this very effective method can be generalized to the case of vector variables and polynomials of higher degrees, too.

%Végül megjegyezzük, hogy az ebben a cikkben felvázolt algoritmus még ez utóbbinál is effektívebb, valamint sokkal általánosabb osztályára alkalmazható az önadjungált  operátoroknak, nemcsak polinomszor gauss alakú operátorokra. Nagy előnye, hogy mindezt diagonalizálás nélkül tehetjük meg és numerikusan is jól és stabilan, kellő pontossággal és gyorsasággal programozható.

\subsection*{Acknowledgments} We are indebted to J\'ozsef Zsolt Bern\'ad, Andr\'as Bodor, \newline Andr\'as Frigyik, M\'aty\'as Koniorczyk, Mikl\'os Pintér, and G\'eza T\'oth for some helpful conversations.

\subsection*{Funding} The first author was supported by the National Research, Development and Innovation Office -- NKFIH, grants no.~124749 and 146922, and by the J\'anos Bolyai Research Scholarship of the Hungarian Academy of Sciences. 

The second author thanks the ``Frontline'' Research Excellence Programme of the NKFIH (Grant no.~KKP133827) and Project no.~TKP2021-NVA-04, which has been implemented with the support provided by the Ministry of Innovation and Technology of Hungary from the National Research, Development and Innovation Fund, financed under the TKP2021-NVA funding scheme. 

The third author was supported by the Hungarian National Research, Development and Innovation Office within the Quantum Information National Laboratory of Hungary grants no.~2022-2.1.1-NL-2022-00004 and 134437.

%\subsection*{Declaration and conflict of interest} We confirm that the manuscript has not published elsewhere and is not under consideration by another journal. The authors have no conflict of interest to declare. 

\printbibliography

\end{document}